\numberwithin{equation}{section}
\DeclareMathSymbol{\shortminus}{\mathbin}{AMSa}{"39}
\newcommand{\bbr}{\mathbb{R}}
\newcommand{\E}{\mathbb{E}}
\newcommand{\bbe}{\mathbb{E}}
\newcommand{\bbn}{\mathbb{N}}
\newcommand{\bbp}{\mathbb{P}}
\newcommand{\bbd}{\mathbb{D}}
\newcommand{\fcal}{\mathcal{F}}
\newcommand{\ncal}{\mathcal{N}}
\newcommand{\acal}{\mathcal{A}}
\newcommand{\loss}{\mathbf{L}}
\newcounter{modcount}
\newcommand{\modulo}[2]{%
\setcounter{modcount}{#1}\relax
\ifnum\value{modcount}<#2\relax
\else\relax
\addtocounter{modcount}{-#2}\relax
\modulo{\value{modcount}}{#2}\relax
\fi}
\newcommand{\tablepictures}[4][c]{\begin{tabular}[#1]{@{}c@{}}#2\vspace{0.5cm}\\(\alph{#4}) #3\end{tabular}}
\newcounter{gridsearch}
\newcommand{\tabpic}[2]{
    \stepcounter{gridsearch}
    \modulo{\thegridsearch}{2}
%    \ifnum\strcmp{\modulo{#1}{2}}{1}
    \ifnum\value{modcount}=0
        \tablepictures[t]{#1}{#2}{gridsearch}\\[2.0cm]
    \else
        \tablepictures[t]{#1}{#2}{gridsearch}&~&
    \fi
}
\newtheorem{lemma}{Lemma}[section]
\newtheorem{proposition}[lemma]{Proposition}
\newtheorem{theorem}[lemma]{Theorem}
\newtheorem{example1}[lemma]{Example}
\newtheorem{rem1}[lemma]{Remark}
\newtheorem{assumption}[lemma]{Assumption}
\newtheorem{alg1}[lemma]{Algorithm}
\newtheorem{me1}[lemma]{Mechanism}
\newenvironment{remark}{\begin{rem1}\rm}{\end{rem1}}
\begin{document}
	
	\title{Dynamic Default Contagion in Heterogeneous Interbank Systems}
	\author{Zachary Feinstein\thanks{Stevens Institute of Technology, School of Business, Hoboken, NJ 07030, USA. \tt{zfeinste@stevens.edu}} \and Andreas S{\o}jmark\thanks{Imperial College London, Department of Mathematics, London, SW7 2AZ, UK. \tt{a.sojmark@imperial.ac.uk}}}
	\date{\today}
	\maketitle
	\abstract{
		In this work we provide a simple setting that connects the structural modelling approach of Gai--Kapadia interbank networks with the mean-field approach to default contagion.
		To accomplish this we make two key contributions.  First, we propose a dynamic default contagion model with endogenous early defaults for a finite set of banks, generalising the Gai--Kapadia framework.  Second, we reformulate this system as a stochastic particle system leading to a limiting mean-field problem.  We study the existence of these clearing systems and, for the mean-field problem, the continuity of the system response.
		%The main result is to demonstrate a mean field interaction that can be found as the limit of the finite bank system generated from a finite Eisenberg--Noe style network. In this way, we connect two previously disparate frameworks for systemic risk, and in turn we provide a bridge for exploiting recent advances in mean field analysis when modelling systemic risk. The mean field limit is shown to be well-posed and is identified as a certain conditional McKean--Vlasov type problem that respects the original network topology under suitable assumptions.
		~\\
		\noindent\textbf{Keywords:} systemic risk; financial networks; default contagion; mean-field model.
	}

	\section{Introduction}\label{sec:intro}
	
	More than a decade after the collapse of Lehman Brothers and the threat of contagious defaults throughout the global financial system in 2008, systemic risk is still of vital importance to study.  Systemic risk is the risk of financial contagion, i.e., when the distress of one institution spreads due to interlinkages in balance sheets.  %The 2008 financial crisis demonstrated the magnitude of the costs that systemic crises produce; this necessitates the design of models to consider stress testing of financial institutions to improve regulation and mitigate the worst effects of a crisis.
	In this work we focus on the modelling of default contagion, which occurs if the failure of one bank or institution to repay its debts in full causes other defaults, triggering a chain reaction of failing banks.  This may occur through a network of interbank obligations as studied in the seminal works of~\cite{EN01, GK10, RV13} in a static, network-based setting.  More specifically, in those works, the default of a bank causes direct impacts to the balance sheets of other banks in the financial system. This loss of capital can cause other banks to default, thus spreading the original shock further throughout the system.
	
	Most work on network models for default contagion have focused on a static setting.  However, banking balance sheets are highly dynamic and subject to fluctuations due to, e.g., market movements.
	Indeed the conclusion of \cite{EN01} gives a discussion of how to include multiple clearing dates and time dynamics, which is studied in \cite{CC15,ferrara16}. Additionally, \cite{KV16} considers a similar approach to a financial model with multiple maturities.
	Most prior works on dynamic network models consider a discrete time setting~\cite{CC15,ferrara16,KV16}.  As far as we are aware, the only two extensions of the Eisenberg--Noe framework~\cite{EN01} to continuous time are~\cite{BBF18,sonin2017}. In this work we are interested in such a continuous time representation with early defaults which neither~\cite{BBF18,sonin2017} allow.

	Moreover, we are interested in making a precise connection between this dynamic balance sheet framework and the recent probability literature on mean-field approaches to contagion modelling \cite{HLS18, HS18, LS18a, NS17, NS18}. This, of course, also links to the financial mathematics literature on mean-field models for systemic risk, as exemplified by \cite{bo_capponi,capponi_clusters, fouque2013stability}, although these works have focused on `flocking effects' modelled by mean-reversion rather than default contagion. When formulating their models, both these strands of literature abstract away any considerations of the precise financial assumptions and underlying balance sheet mechanisms. This leaves an important gap that we aim to address with this work.
    Comparably to the mean-field limit considered herein, \cite{bo2014bilateral} presented the bilateral credit valuation adjustment for credit default swaps referencing a large number of entities; in particular, that work demonstrates an explicit characterization of the weak limit as the number of reference entities approaches infinity. In contrast, we are interested in sending the number of institutions in the financial network to infinity.

	Our main contribution is two-fold.  First, as detailed in Section~\ref{sec:dynamic}, we propose a dynamic default contagion model with early defaults driven by insolvency that generalises the Gai--Kapadia framework~\cite{GK10} under recovery of face value with historical price accounting (explained in detail below).  An analogous model for early defaults driven by illiquidity is provided in Supplemental~\ref{sec:illiquid}. We then prove existence of a greatest clearing solution in this financial setting and provide a discussion of how defaults spread through the system of banks (Prop.~\ref{prop:exist}). By itself, the introduction of \emph{early defaults} in this dynamic balance sheet based network model is novel in the literature, see also the related framework of \cite{Lipton2016}. Second, as detailed in Section~\ref{sec:mean-field}, starting from our \emph{finite bank} setting, we propose a tractable framework for reformulating the model as a \emph{stochastic particle system}, giving the greatest clearing capital solution when the contagion is governed by a suitable `cascade condition' (Prop.~\ref{prop:particle}). This then leads us to a limiting \emph{mean-field problem}, yielding a succinct  representation of the system, which is shown to evolve continuously in time under a constraint on the interactions (Thm.~\ref{thm_cont}). If this constraint is violated, jumps may survive the passage to the limit, and we end the paper by discussing a possible characterisation of such jumps in terms of a mean-field analogue of the finite cascade condition; see \eqref{eq:limit_PJC}.
	As such, this work provides an intriguing starting point for reconciling the structural (finite bank) balance sheet literature with the recent dynamic continuous-time mean-field approaches for default contagion.

	\section{Dynamic balance sheets and finite bank model}\label{sec:dynamic}
	
	In order to study defaults prior to maturity, we need to consider the valuation of any obligations due \emph{after} the default event.  Within the corporate debt literature, three primary notions of such valuation are considered (see, e.g.,~\cite{guo2008distressed}).
	Under \textbf{recovery of face value [RFV]}, in case of a default event, holders of bonds (of the same issuer) recover a fraction of the face value of the held bonds regardless of maturity. In the (static time) systemic risk literature, this formulation corresponds with the default contagion proposed in, e.g., \cite{GK10}.  The other notions are recovery of treasury [RT] (recovery of a fraction of the present value of a risk-free bond with the same maturity; this is equivalent to RFV under risk-free rate of $r = 0$) and recovery of market value [RMV] (recovery of a fraction of the value of the bond from just prior to default; this formulation corresponds with the default contagion process of~\cite{EN01,RV13} assuming a notion of network valuation adjustments as in, e.g., \cite{BF18comonotonic,barucca2016valuation}).
	In this work we focus on RFV as it is often considered more accurate than RMV in approximating the realised recovery rates for corporate bonds \cite{guha2020structural, guo2008distressed}.
	
	As hinted with the notions of valuation in default (especially RMV), we also need some notion of valuing bonds that have not yet matured.  This has been studied with a network valuation adjustment by, e.g., \cite{BF18comonotonic,barucca2016valuation}, in a single time step setting with perfect information of the entire financial system;~\cite{bo2014bilateral}, similarly, studies the bilateral credit valuation adjustment problem for credit default swaps.  Herein we assume only limited information is known by each bank about its counterparties.  Therefore in order to determine the probability of a \emph{future} default event only historical information about those obligations is known; as such \textbf{historical price accounting} will be utilised.  That is, the default probability of any institution is assumed based on its historical rate (e.g., based on the credit rating of a bond).
    That is, the banking book assumes that there is no probability of default regardless of the wealth of its counterparties.  This is due to the limited information available to each firm, i.e., the absence of an active market with which to mark interbank assets forces banks to consider the historical rate of default instead for accounting purposes.
    Within this work, to simplify matters, we will assume the historical default probability is 0 and, thus, this probability is utilised in the banking book until the default realises. %\footnote{This (accounting) 0 default probability can be relaxed with only minor alterations to the banking book.}
    This is an optimistic accounting rule for marking interbank assets as it does not assume any updating of the historical default probability; this can be viewed as a best-case analysis of default contagion as it provides the greatest possible value for interbank assets and thus provides a bound on any other valuation system.
    The partial information available to each bank in the system implies, additionally, that banks can\emph{not} anticipate the default of their counterparties, but must wait until the event is declared in order to update their balance sheet.  This can precipitate a financial shock which can cascade through the financial system as will be detailed below.
	
	\begin{figure}[t]
		\centering
		\begin{subfigure}[t]{0.42\textwidth}
			\centering
			\begin{tikzpicture}[x=\linewidth/4.9,y=5mm]
			\draw[draw=none] (0,9.5) rectangle (5,10) node[pos=.5,yshift=0.3em]{};
			\draw[draw=none] (0,9) rectangle (3,9.5) node[pos=.5,yshift=0.2em]{\small \bf Assets};
			\draw[draw=none] (3,9) rectangle (5,9.5) node[pos=.5,yshift=0.2em]{\small \bf Liabilities};
			
			%\filldraw[fill=blue!20!white,draw=black] (0,7.5) rectangle (3,9) node[pos=.5,style={align=center}]{External \\ $x_i(0) + \int_0^t dx_i(s)$};
			%\filldraw[fill=purple!20!white,draw=black] (0,5.5) rectangle (3,7.5) node[pos=.5,style={align=center}]{Future External \\ $\bbe[\int_t^T dx_i(s) \; | \; \fcal_t]$};
			%\filldraw[fill=yellow!20!white,draw=black] (0,4) rectangle (3,5.5) node[pos=.5,style={align=center}]{Interbank (Solvent) \\ $\sum_{j \in S(t)} \int_0^T dL_{ji}(s)$};
			%\filldraw[fill=orange!20!white,draw=black] (0,0) rectangle (3,4) node[pos=.5,style={align=center}]{Interbank (Insolvent) \\ $\sum_{j \in \ncal \backslash S(t)} \left(\begin{array}{l}\int_0^{\tau_j} dL_{ji}(s)\\ + R_2 \int_{\tau_j}^T dL_{ji}(s)\end{array}\right)$};
			\filldraw[fill=blue!20!white,draw=black] (0,6.5) rectangle (3,9) node[pos=.5,style={align=center}]{\footnotesize External (Mark-to-Market) \\ ${\scriptstyle\bbe[x_i(T) \; | \; \fcal_t]}$};
			%\filldraw[fill=blue!20!white,draw=black] (0,7.5) rectangle (3,9) node[pos=.5,style={align=center}]{External \\ $x_i(t)$};
			%\filldraw[fill=purple!20!white,draw=black] (0,5.5) rectangle (3,7.5) node[pos=.5,style={align=center}]{Future External \\ $\bbe[x_i(T) \; | \; \fcal_t]  - x_i(t)$};
			\filldraw[fill=yellow!20!white,draw=black] (0,4) rectangle (3,6.5) node[pos=.5,style={align=center}]{\footnotesize Interbank (Solvent) \\ ${\scriptstyle\sum_{j \in \acal_t} L_{ji}(T)}$};
			\filldraw[fill=orange!20!white,draw=black] (0,0) rectangle (3,4) node[pos=.5,style={align=center}]{\footnotesize Interbank (Insolvent) \\ ${\scriptstyle\sum_{j \in \ncal \backslash \acal_t} \Bigl(\begin{array}{l}{\scriptstyle (1 - R) L_{ji}(\tau_j)}\\ {\scriptstyle \;+ \,R L_{ji}(T)}\end{array}\Bigr)}$};
			
			\filldraw[fill=red!20!white,draw=black] (3,3) rectangle (5,9) node[pos=.5,style={align=center}]{\footnotesize Total \\ ${\scriptstyle\sum_{j \in \ncal_0} L_{ij}(T)}$};
			\filldraw[fill=green!20!white,draw=black] (3,0) rectangle (5,3) node[pos=.5,style={align=center}] (t) {\footnotesize Capital \\ ${\scriptstyle K_i(t)}$};
			\end{tikzpicture}
			\caption{\footnotesize Stylised balance sheet for bank $i $ at time $t$.}
			\label{fig:balance-sheet}
		\end{subfigure}
		~
		\begin{subfigure}[t]{0.47\textwidth}
			\centering
			\includegraphics[width=0.73\linewidth]{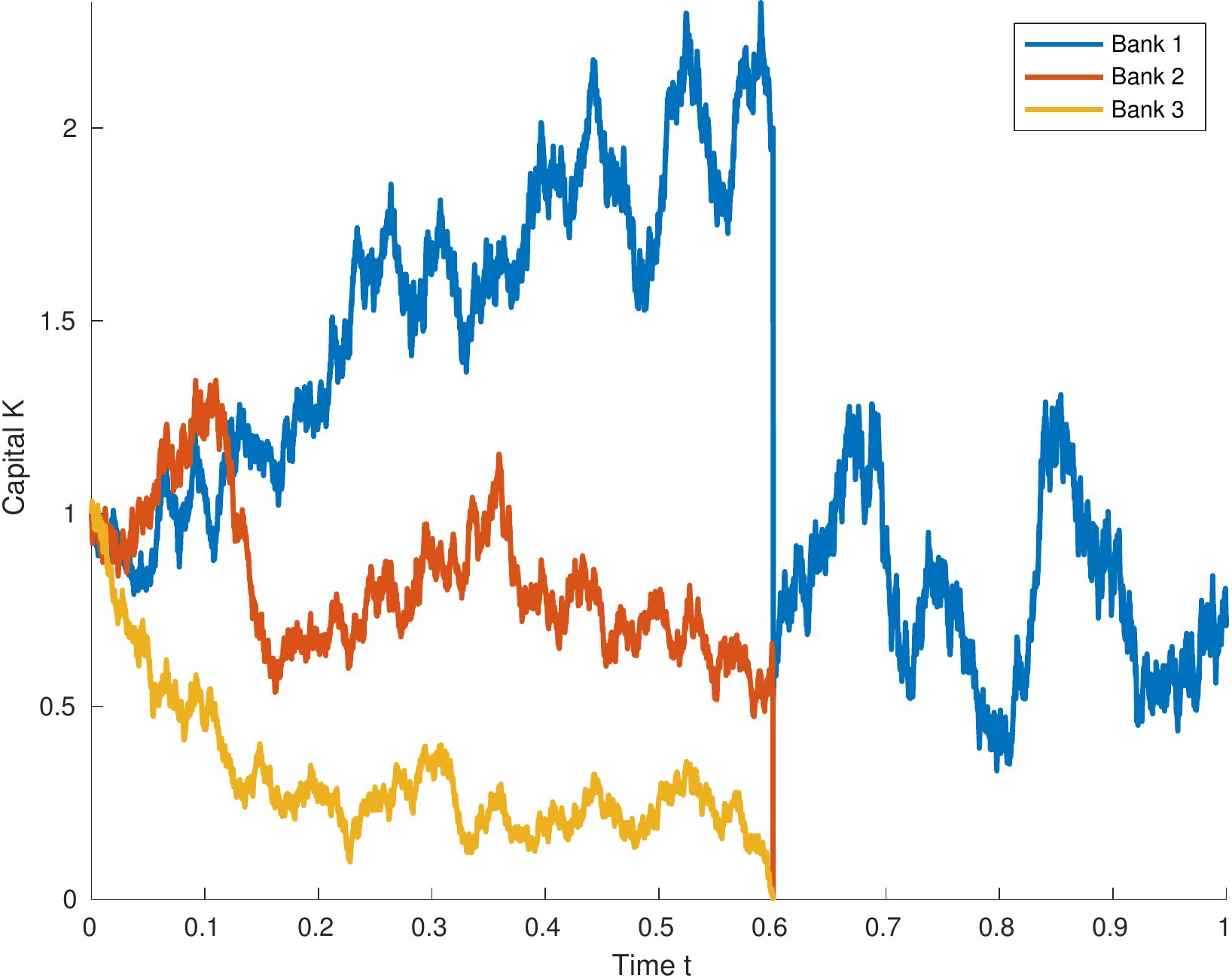}
			\caption{\footnotesize Realisation of a 3 bank system driven by correlated GBMs with a contagious default at $t \approx 0.6$.}
			\label{fig:3bank}\vspace{-4pt}
		\end{subfigure}
		\caption{Stylised balance sheet at time $t$ and a realisation of this system over time.}\vspace{-8pt}
	\end{figure}
	
	\paragraph{2.1 Balance sheet construction}
	Throughout this section we will consider a financial system with $n \in \bbn$ financial institutions. This system does not include the central bank or any other non-financial entities; we will consider such entities, aggregated into a node $0$.  Notationally, $\ncal = \{1,2,...,n\}$ is the set of banks and $\ncal_0 = \ncal \cup \{0\}$ includes any external entities.
	
	In order to construct a continuous-time model with maturity $T < \infty$ we will begin by considering the stylised balance sheet for a generic bank $i$ in our system.  This balance sheet should be viewed as a dynamic and \emph{stochastic} version over times $[0,T]$ of the static setting of \cite{GK10} with filtered probability space $(\Omega,\fcal,(\fcal_t)_{t \in [0,T]},\bbp)$ with risk-neutral measure $\bbp$.  Throughout time, all assets are of only two types: interbank assets and external assets; all liabilities are either interbank (and thus assets for another bank $j$ in the system) or external and owed to node $0$.
	For simplicity, we will set the risk free rate to 0, i.e., $r = 0$.
	This stylised balance sheet is depicted in Figure~\ref{fig:balance-sheet}.
	
	Specifically, we will consider the book values for bank $i \in \ncal$.  Let $x_i(T) \in L^2(\fcal_T)$ be the value of the external assets for this bank at maturity.  The value of this external asset at time $t \in [0,T]$ is the discounted (risk-neutral) expectation of the value at maturity, i.e., $\bbe[x_i(T) \; | \; \fcal_t]$.  Throughout this work we will take the external asset processes $x_i$ to follow a non-negative (It\^o) process; if the drift component of $x_i$ is constructed from the risk-free rate $r = 0$ then $x_i(t) = \bbe[x_i(T) \; | \; \fcal_t]$, otherwise these values may differ.
	
	In contrast, the liabilities (and thus also interbank assets) have fixed and known (i.e., deterministic) coupon payments.\footnote{From the perspective of RFV, though we call these obligations coupon payments, each of these obligations will be treated as a separate zero-coupon bonds with different maturities.}  Notationally, let $L_{ij}(t)$ be the sum total of all coupon payments owed to bank $j \in \ncal_0$ up through time $t$ (i.e., on $[0,t]$).  In this way, $L_{ij}(T)$ denotes the total obligations through the maturity time $T$ owed to bank $j$.  For simplicity, we will additionally assume the interbank liabilities to be continuous in time, i.e., $t \in [0,T] \mapsto L_{ij}(t)$ is continuous.
	
	When valuing interbank assets, there are two settings to consider: \emph{old} obligations and \emph{future} coupon payments.  Consider a fixed time $t \in [0,T]$ as the ``present,'' i.e., all obligations $L_{ji}(t)$ from bank $j$ to bank $i$ are considered old whereas the obligations $L_{ji}(T)-L_{ji}(t)$ due after $t$ will be considered future obligations.
	We follow the simple assumption that every solvent firm makes all its current (and past) coupon payments on schedule.  To simplify notation, let $\acal_t$ denote the set of banks that are solvent at time $t$ (to be discussed further below).  If bank $j \in \acal_t$ is solvent at time $t$, then the value of the interbank assets $L_{ji}(t)$ for bank $i$ is exactly $L_{ji}(t)$ as these payments have been actualised.
    \begin{remark}\label{rem:insolvency}
%    In~\cite{BBF18} it is possible for solvent firms to have a cash shortfall, we assume a functioning and well-capitalised \emph{external} repurchase agreement market to overcome any need to consider the specific cash account.
    Herein we assume there is a functioning and well-capitalised \emph{external} repurchase agreement market so that no solvent firm is subject to a cash shortfall; this assumption is relaxed in, e.g.,~\cite{BBF18} or Supplemental~\ref{sec:illiquid}.\footnote{In practice, this repurchase agreement market would generally involve the financial firms themselves.  As such, raising liquidity would result in altered exposures $L$.  An interesting approach to this problem would be to allow for strategic behavior as in \cite{NS18}.}
    That is, any bank with a liquidity shortfall, but \emph{positive} capital, is able to borrow using its future assets as collateral.  Due to the collateralized and short-term nature of these repurchase agreements, we simplify to assume that such an agreement does not impose any additional costs on the bank.  It is, therefore, clear that a bank with negative cash account and negative net worth would be unable to cover its obligations and would default.  In fact, even if the cash account is positive, the shareholders of a bank would choose to default once the capital is no longer positive as these shareholders would find their financial prospects worse by paying off their obligations than by declaring bankruptcy and defaulting on all current obligations.  Thus, with the functioning and well-capitalised repurchase agreement market, banks will default exactly when their capital is no longer positive.
    We wish to draw the reader's attention to Supplemental~\ref{sec:illiquid} for details on defaults due to \emph{illiquidity} when, e.g., the repurchase agreement market is no longer functioning (as occurred in the 2008 financial crisis~\cite{GM09,B09}).
    \end{remark}
	Consider now bank $j \in \ncal \backslash \acal_t$ insolvent which defaults on its obligations at some time $\tau_j \leq t$.  As stated already all payments $L_{ji}(\tau_j)$ due before default are paid (and therefore marked) in full.  Following the RFV paradigm, the obligations $L_{ji}(t)-L_{ji}(\tau_j)$ are only partially repaid based on the recovery rate $R \in [0,1]$.  We assume this fixed recovery of prior defaulted obligations since defaults usually cause a delay in payments; this delay in the actualisation of payments (due to, e.g., bankruptcy courts) during the period of interest $[0,T]$ prompts us to assume the RFV paradigm rather than a clearing system such as studied in~\cite{EN01,RV13}.
	
	It remains to consider the value of \emph{future} interbank coupon payments.  Unlike the external assets $x_i$, these interbank assets $L_{ji}$ are \emph{nonmarketable}, i.e., there is no market in which to determine the value of the risk of default for these future coupon payments.  In fact, each bank only has direct information about its own assets and liabilities.  As such, due to this asymmetry of information and without a market to mark these interbank assets, we assume that the banks follow a historical price accounting rule for valuing these assets.  That is, all future coupon payments owed from solvent firms are marked in full until an actualised default event occurs.  Once a default event occurs, following the RFV paradigm presented above, all future obligations are immediately marked down by the recovery rate $R \in [0,1]$.
	
	Therefore, in total, the value of interbank assets $L_{ji}$ depends on the default time of bank $j$.  If we set the default time of bank $j$ to $\tau_j$ (to be discussed in greater detail below), interbank assets $L_{ji}$ are marked at time $t$ as: $L_{ji}(T)$ if $t < \tau_j$, and $L_{ji}(\tau_j) + R[L_{ji}(T)-L_{ji}(\tau_j)]$ if $t \geq \tau_j$ (equivalently $j \in \acal_t$ and $j \in \ncal \backslash \acal_t$ respectively).
	
	The stylised balance sheet encoding these assets and liabilities at time $t \in [0,T]$ is displayed in Figure~\ref{fig:balance-sheet}.  The realised capital $K_i(t)$ at time $t \in [0,T]$ for bank $i$ is, thus, the difference between the value of the assets and liabilities.  That is, given the set of solvent banks $\acal_t \subseteq \ncal$ at time $t$, the realised capital for bank $i$ is computed as:
	\begin{equation}\label{eq:Kapital}
	K_i(t) := \E[x_i(T) \; | \; \fcal_t] + \sum_{j \in \acal_t} L_{ji}(T) + \sum_{j \in \ncal \backslash \acal_t} \left((1-R)L_{ji}(\tau_j) + R L_{ji}(T)\right) - \sum_{j \in \ncal_0} L_{ij}(T).
	\end{equation}
	As in the static framework of~\cite{GK10}, bank $i$ is deemed to be in default once its assets are worth less than its total liabilities, i.e., when it has negative realised capital.
	That is, insolvency for bank $i$ occurs at the stopping time $\tau_i = \inf\{t \in [0,T] \; | \; K_i(t) \leq 0\}$.
	This is consistent with the notion that there exists a repurchase agreement market external to the system of banks as briefly mentioned previously; if a bank's realised capital is positive then it can raise the necessary cash in order to remain solvent, but it cannot raise sufficient cash to pay its obligations if it has negative realised capital.  %Alternatively, this notion coincides with a technical default based on the covenant that banks must have nonnegative realised capital.
	The set of solvent firms at time $t$ is, formally, given by $\acal_t := \{i \in \ncal \; | \; \tau_i > t\}$.
	
	\begin{assumption}\label{ass:balance-sheet}
		The modelling assumptions expressed above can be summarised thusly:
		\begin{enumerate}
			\item the external assets of each bank follow a stochastic process (which can be correlated to each other) and (being marketable) are marked-to-market with risk-neutral measure $\bbp$;
			\item the interbank assets and liabilities are solely based on contracts written prior to time $0$ and have fixed coupon schedule;
			\item interbank assets (being nonmarketable) are valued using historical price accounting, i.e., priced at face value prior to a default event and reevaluated with RFV after default; and
			\item defaults occur once the realised capital of a bank drops below zero.
		\end{enumerate}
		%These four key modelling assumptions lead to a contagion mechanism in which defaults come as a shock to the system and cause a jump in the capital of any connected institution. 
        Defaults come as a shock to the system and cause a jump in the capital of any connected institution due to the partial information available for the accounting of interbank assets; in a single-firm context, this is made explicit in~\cite{duffie2001term}.
	\end{assumption}
	The shocks due to default outlined above are realistic since the interbank assets are nonmarketable.  If, however, banks attempted a counterparty or network valuation adjustment (see, e.g., \cite{BF18comonotonic, barucca2016valuation}) default shocks would still be expected due to the asymmetric and incomplete information available to the different banks.  
	%\begin{remark}\label{rem:early-default}
	%The stochastic structure introduced herein is necessary for consideration of early defaults.  Without it, the capital of banks would be deterministic and all defaults would be known at the initial time $0$.  Though~\cite{BBF18} introduces a stochastic system for financial networks, it does not consider endogenous early defaults.  That is an innovation of this work.
	%\end{remark}
	
	\paragraph{2.2 Clearing capital}
	Under the setting summarised in Assumption~\ref{ass:balance-sheet}, the realised capital for bank $i$ can be considered as an equilibrium setting that depends directly on the defaulting events of all other banks.  Explicitly, this is provided by~\eqref{eq:Kapital} where the set of solvent firms $\acal_t(K)$ and default times $\tau(K)$ are considered directly as functions of the vector of wealth processes $K: \Omega \times [0,T] \to \bbr^n$.
	The existence of a clearing solution to this equilibrium problem is provided in the following proposition.
	\begin{proposition}[Clearing capital]\label{prop:exist}
        (i) There exists a greatest and least clearing capital $K^\uparrow \geq K^\downarrow$ (component-wise and a.s.~for every time $t$) to the network clearing problem defined by~\eqref{eq:Kapital}. (ii) Any clearing capital solution $K$ to the network clearing problem defined by~\eqref{eq:Kapital} is c\`adl\`ag.
	\end{proposition}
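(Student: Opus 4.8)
The plan is to recast the clearing problem as a monotone fixed-point problem on a complete lattice and invoke the Knaster--Tarski theorem, in the spirit of Eisenberg--Noe, but with the vector of \emph{default times} as the fixed-point variable. Concretely, for any vector $\sigma=(\sigma_1,\dots,\sigma_n)$ of stopping times valued in $[0,T]\cup\{+\infty\}$ (the value $+\infty$ encoding ``no default on $[0,T]$''), I would set $\acal^\sigma_t:=\{j\in\ncal:\sigma_j>t\}$, let $K^\sigma_i$ be the right-hand side of~\eqref{eq:Kapital} with $\acal_t$ and the $\tau_j$ replaced by $\acal^\sigma_t$ and the $\sigma_j$, and put $\Phi_i(\sigma):=\inf\{t\in[0,T]:K^\sigma_i(t)\le 0\}$ (with the convention $\inf\emptyset=+\infty$). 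Then $K$ is a clearing capital if and only if $K=K^{\tau(K)}$, in which case $\tau(K)$ is automatically a fixed point of $\Phi$; conversely any fixed point $\sigma$ of $\Phi$ gives the clearing capital $K^\sigma$. So clearing capitals correspond to fixed points of $\Phi$, and the greatest (resp.\ least) clearing capital should arise from the greatest (resp.\ least) fixed point.

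The two structural properties of $\Phi$ to verify are that it is well-defined on, and order-preserving on, the lattice of stopping-time vectors. For the former, I would note that $\E[x_i(T)\mid\fcal_t]$ is a c\`adl\`ag square-integrable martingale, and that the $j$-th interbank contribution to $K^\sigma_i(t)$ equals $L_{ji}(T)$ on $\{t<\sigma_j\}$ and $(1-R)L_{ji}(\sigma_j)+RL_{ji}(T)$ on $\{t\ge\sigma_j\}$ (which is $\fcal_t$-measurable there); since $L_{ji}$ is continuous and non-decreasing and $R\in[0,1]$, this is adapted and, path by path, c\`adl\`ag with a single \emph{downward} jump of size $(1-R)\bigl(L_{ji}(\sigma_j)-L_{ji}(T)\bigr)\le 0$ at $\sigma_j$. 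Hence $K^\sigma$ is adapted and c\`adl\`ag, so each $\Phi_i(\sigma)$ is a stopping time and $\Phi$ maps stopping-time vectors to stopping-time vectors. Monotonicity is a short case check: if $\sigma\le\sigma'$ componentwise a.s., then comparing the positions of $t$ relative to $\sigma_j\le\sigma'_j$ shows each such contribution is pointwise no larger under $\sigma$ than under $\sigma'$ (again using $L_{ji}$ non-decreasing, $R\in[0,1]$), so $K^\sigma\le K^{\sigma'}$ and therefore $\{K^\sigma_i\le 0\}\supseteq\{K^{\sigma'}_i\le 0\}$, i.e.\ $\Phi(\sigma)\le\Phi(\sigma')$.

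To close~(i) I would work on the complete lattice of a.s.-equivalence classes of $[0,T]\cup\{+\infty\}$-valued stopping times ordered by a.s.\ domination: arbitrary essential suprema and infima exist (by the standard theory of essential suprema) and, being expressible as countable sups/infs of finite maxima/minima drawn from the family, remain stopping times, and the $n$-fold product is again a complete lattice. Knaster--Tarski applied to the order-preserving $\Phi$ then yields a greatest and a least fixed point $\tau^\uparrow\ge\tau^\downarrow$; I set $K^\uparrow:=K^{\tau^\uparrow}$ and $K^\downarrow:=K^{\tau^\downarrow}$. For an arbitrary clearing capital $K$, $\tau(K)$ is a fixed point, so $\tau^\downarrow\le\tau(K)\le\tau^\uparrow$ and the monotonicity of $\sigma\mapsto K^\sigma$ gives $K^\downarrow\le K\le K^\uparrow$ componentwise and a.s.\ for every $t$; in particular $K^\uparrow\ge K^\downarrow$. (Equivalently, $\tau^\uparrow$ is obtained by iterating $\Phi$ downward from ``nobody ever defaults'' and $\tau^\downarrow$ upward from $\sigma\equiv 0$, possibly through limit ordinals.) For~(ii), let $K$ be any clearing capital; taking the progressively measurable modification implicit in being a solution, the hitting times $\tau_i(K)=\inf\{t:K_i(t)\le 0\}$ are stopping times by the d\'ebut theorem, so $K=K^{\tau(K)}$ and the path analysis of the previous paragraph applies verbatim, giving that $K$ is c\`adl\`ag.

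The fixed-point abstraction itself is standard; the steps I expect to require genuine care are (a) confirming that $\Phi$ really maps the lattice of stopping-time vectors into itself, which is exactly where the c\`adl\`ag structure of $K^\sigma$ and the sign of the default-induced jumps enter, and (b) the mild circularity in~(ii): one wants the hitting times $\tau_i(K)$ to be stopping times in order to run the path argument, yet that argument is itself what produces the c\`adl\`ag paths. This is resolved not by invoking c\`adl\`agness but by appealing to the d\'ebut theorem under the standing measurability assumption on a clearing solution.
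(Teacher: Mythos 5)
Your proposal is correct and takes essentially the same route as the paper: both parts rest on a Knaster--Tarski argument exploiting that the capital enters the fixed-point equation only through the default times (with monotonicity coming from $L_{ji}$ non-decreasing and $R\le 1$), followed by a pathwise decomposition showing that the capital computed from any fixed vector of default times is c\`adl\`ag. The only difference is presentational -- you take the vector of default times as the fixed-point variable on the lattice of stopping times, whereas the paper iterates on the capital processes themselves over the lattice $\prod_{t}\bbd_t$ -- and your version in fact makes the lattice-completeness and measurability points more explicit than the paper's terser argument.
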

	\begin{proof}
       (i) First, note that the fixed point problem for the capital process $K$ always maps into the complete lattice (with component-wise and almost sure ordering for every time $t$) $\prod_{t \in [0,T]} \bbd_t$ for
		\[\bbd_t := \bbe[x(T) \; | \; \fcal_t] + \sum_{j \in \ncal} [R,1] \times L_{j\cdot}(T) - \sum_{j \in \ncal_0} L_{\cdot j}(T).\]
		Second, note that capital process $K$ only depends on itself through the default times $\tau(K)$ (and the set of solvent firms $\acal_t(K) = \{i \in \ncal \; | \; \tau_i(K_i) > t\}$).  Clearly, by definition, as the capital process $K$ decreases the default times $\tau(K)$ do \emph{not} increase.  Further, as banks default, the entire system's wealth drops as well since $R \leq 1$.  With this, we are able to complete this proof through an application of Tarski's fixed point theorem.

     (ii) Let $K$ denote an arbitrary clearing capital with associated default times $\tau_i = \inf\{t \in [0,T] \; | \; K_i(t) \leq 0\}$.  Trivially, by~\eqref{eq:Kapital}, $t \in (\tau_{[k]},\tau_{[k+1]}) \mapsto K_i(t)$ is continuous for any $k \in \{0,1,...,n\}$ where $\tau_{[k]}$ denotes the $k^{th}$ order statistic ($0 =: \tau_{[0]} \leq \tau_{[1]} \leq \tau_{[2]} \leq ... \leq \tau_{[n]} \leq \tau_{[n+1]} := T$ almost surely).  It remains to prove that $K_i(t)$ is right-continuous at all default times.  Consider $k \in \{0,1,...,n\}$ such that $\tau_{[k]} < \tau_{[k+1]}$.  By construction, the set of solvent banks $\acal_t = \acal_{\tau_{[k]}}$ a.s.\ for every $t \in [\tau_{[k]},\tau_{[k+1]})$.  Therefore, for $t \in [\tau_{[k]},\tau_{[k+1]})$, the capital of bank $i$ satisfies
        \[K_i(t) = \E[x_i(T) \; | \; \fcal_t] + \sum_{j \in \acal_{\tau_{[k]}}} L_{ji}(T) + \sum_{j \in \ncal \backslash \acal_{\tau_{[k]}}} ((1-R)L_{ji}(\tau_j) + R L_{ji}(T)) - \sum_{j \in \ncal_0} L_{ij}(T)\]
        with the \emph{fixed} set of solvent firms.  This is clearly right continuous as $K_i(\tau_{[k]}) = \lim_{t \searrow \tau_{[k]}} K_i(t)$ for every bank $i$.
	\end{proof}
	As is typical in the literature, we will primarily focus on the greatest clearing solution $K^\uparrow$.  Briefly, we will discuss how to use a \emph{fictitious default algorithm} to compute this clearing solution forward in time.  Such an algorithm assumes that at time $t \in [0,T]$, any bank that was solvent prior to $t$ ($\acal_{t\shortminus}$) is assumed a priori to still be solvent; this is the best case scenario for all banks due to the downward stresses from a default.  Solvency ($K_i(t) > 0$) of all banks is then checked under this scenario; if no banks default we can move forward in time, otherwise any new defaults may cause a domino effect of further defaults.  In the case of defaults, we update the balance sheet of all solvent firms to determine if this shock causes a cascade of failures. This sequential testing for new defaults and updating the balance sheets continue until no new defaults occur.  In practice this algorithm is run using an event finding algorithm to determine the time of the initial default, at that time the cascading defaults are determined until the system re-stabilises at a new set of solvent institutions, and the stochastic processes evolve normally until the next default event.  This is demonstrated in Figure~\ref{fig:3bank} where the insolvency of one bank causes another bank to default as well.  If desired, the least clearing solution $K^\downarrow$ could be found analogously with a fictitious solvency algorithm instead.

	\section{Mean-field version of the model}\label{sec:mean-field}

Recall that $L_{ij}(T)$ denotes the liabilities for the full period $[0,T]$. %Since we are principally interested in the stability of the financial system over a short period  $[0,T]$ (given its initial configuration), it is reasonable to assume repayment does not alter the network structure, and we also wish to keep things simple, so we model repayment in the following way:%
To have a simple model for repayment, we will assume that there is a non-decreasing function $\psi(T,\cdot):[0,T]\rightarrow[0,\infty)$ and constant relative liabilities $\lambda_{ij}\geq0$ such that the liabilities of bank $i$ owed for the remaining period $[t,T]$ are given by
\begin{equation}\label{liab_dyn}
L_{ij}(T)-L_{ij}(t)=\psi(T,t) \lambda_{ij}
\end{equation}
with $L_{ij}(0)=0$, for $i,j\in \mathcal{N}_0$, where $\lambda_{ij}=0$ if $i=j$ or $i=0$. For example, we could consider a linear repayment schedule so that $\psi(T,t)=T-t$.

Let us write $\lambda_i^\mathrm{ext}:=\lambda_{i0}$ for the external liabilities. By using \eqref{liab_dyn} and reorganising the sums in equation \eqref{eq:Kapital} for the capital $K_i(t)$, we find that it satisfies
\begin{equation}\label{eq:K_i}
K_i(t)= E[x_i(T) \; | \; \fcal_t] - \psi(T,0)\bigl( \lambda_{i}^\mathrm{ext} + \sum_{j \in \ncal} [\lambda_{ij} - \lambda_{ji}]\bigr)- (1-R)\!\!\!\sum_{j \in \ncal\backslash\acal_t}\! \psi(T,\tau_j)\lambda_{ji},\vspace{-2pt}
\end{equation}
where we recall that $\mathcal{N}$ is the full set of banks while $\acal_t$ is the set of solvent banks at time $t$. This coupled system is the focus of the rest of the paper, where we will further assume that the stochastic processes $x_i(t)$ are given by geometric Brownian motions
\begin{equation}\label{GBM}
dx_i(t) = x_i(t)[\mu_i(t) dt + \sigma_i(t)] dW_i(t) \quad \text{with}\quad W_i(t)=\sqrt{1-\rho^2}B_i(t)+\rho B_0(t),
\end{equation}
for independent Brownian motions $B_0,\ldots,B_n$, and continuous functions $\mu_i,\sigma_i$.

\paragraph{3.1 Tractable asymmetry}

As has been observed in several empirical studies, interbank networks often display pronounced core-periphery features with negligible periphery-to-periphery interactions \cite{CP14,FL15, veld2014core}. This type of network structure for the relative liabilities $\lambda_{ij}$ can be conveniently captured by the framework we introduce next. Naturally, our framework is an idealisation of reality,
but it leads to a tractable model, and we stress that interbank liabilities are not fully observable in practice so approximations will always be involved.

In short, one may reasonably hope to explain the heterogeneity of the relative liabilities by only a low number of underlying characteristics, say $k \ll n$, for a system with $n$ banks. Perhaps a little surprisingly at first, we can have a rich and practically relevant model already with $k=1$. Indeed, similarly to \cite{FL15}, we can take as our one characteristic a global score for the `coreness' of each bank, so that the relative liabilities are given by $\lambda_{ij}=u^iv^j$ for $i\neq j$ (and $\lambda_{ij}=0$ for $i=j$), where $u^i\geq 0$ is a score for `how core' bank $i$ is in terms of borrowing and $v^j \geq 0$ is a score for `how core' bank $j$ is in terms of lending. In practice, these scores $u^i$ and $v^j$ could, e.g.,~be obtained by generating samples of the partially observed liabilities and then minimising the squared off-diagonal deviations $\sum_{i\neq j}(\lambda_{ij}-u^i v^j)^2$.

Depending on the task at hand, we may want a higher level of granularity than a global score for `coreness'.  As discussed in~\cite{CP14} financial networks typically have a `tiering' order, whereby \emph{top-tier} banks lend to each other and \emph{lower-tier} banks, while \emph{lower-tier} banks do \emph{not} lend to each other but instead lend to \emph{top-tier} banks. This suggests an organisation of the liability network according to (i) what tier and subgroup thereof the `borrower' belongs to, and (ii) how the lending of the `lender' is spread out according to this.  Generalising the earlier $1$-dim coreness score, we propose to let (i) be summarised by a $k$-dim score $u^i=(u^i_1,\ldots, u^i_k)$ along $k$ given characteristics, and (ii) by a corresponding $k$-dim score $v^i=(v^i_1,\ldots, v^i_k)$ such that
\vspace{-3pt}\begin{equation}\label{lambda_fact}
\lambda_{ij}=u^i \cdot v^j =\sum_{l=1}^k u^i_l v^j_l \quad \text{for} \quad i\neq j,\qquad \text{and} \qquad \lambda_{ij}=0 \quad \text{for} \quad i=j.\vspace{-3pt}
\end{equation}
We refer the interested reader to Supplemental~\ref{sec:4example} for a four subgroup example illustrating this construction.

%Naturally, the above could be combined with a local community structure with heterogeneous communities locally of core-periphery type. More generally,
In place of \eqref{lambda_fact} serving as a model for the network structure, we note that one could also treat it as a form of principle component analysis of the network, similarly to the use of spectral decompositions for core-periphery detection \cite{porter_2016} and related financial contagion models \cite{amini_minca, spiliopoulos_2019}.

\paragraph{3.2 Stochastic particle system}

In line with real-world systems, we assume net-lenders in the interbank  market still have strictly positive net liabilities overall, i.e.,~$\lambda_{i}^\mathrm{ext} + \sum_{j \in \ncal} [\lambda_{ij} - \lambda_{ji}]>0$. Thus, $K_i(t)<E[x_i(T) \,| \, \fcal_t]$ in \eqref{eq:K_i}, which simply says that the capital is strictly less than the value of external assets, and so we can introduce the logarithmic \emph{distances-to-default}
\begin{equation}\label{dist-to-def}
X_i(t):=\log \Bigl(  \frac{E[x_i(T) \; | \; \fcal_t]}{  E[x_i(T) \; | \; \fcal_t]-K_i(t)}  \Bigr), \quad \text{for}\quad i\in \mathcal{N}.
\end{equation}
This allows us to describe the health of the financial system in the following way.
\begin{proposition}[Particle system]\label{prop:particle}
	Assuming \eqref{liab_dyn}, \eqref{GBM} and \eqref{lambda_fact}, the greatest clearing capital solution to \eqref{eq:K_i} corresponds to $X$ in \eqref{dist-to-def} being the unique c\`adl\`ag solution to
	\begin{equation}\label{particle_sys}
	\begin{cases}
	dX_i(t) =  -\tfrac{1}{2}\sigma_i^2(s)ds  +  \sigma_i(s)dW_i(t) -dF_{i}(t), \;\;\; \tau_i=\inf \{ t\geq 0 \; | \;  X_i(t) \leq 0  \},\\
	F_{i}(t)=\displaystyle \log\Bigl(1 + \frac{1\!-\!R}{\psi(T,0)\Lambda_i} \sum_{l=1}^{k} v_{l}^i \! \int_0^t \psi(T,s) d\mathcal{L}^n_{l,i}(s) \Bigr),\;\;\; \mathcal{L}^n_{l,i}(t) = \sum_{j \neq i }^{n} u_{l}^j \mathbf{1}_{t\geq\tau_j}  \\
X_i(0) = \displaystyle\log\Bigl( \frac{x_{i}(0)}{\psi(T,0)\Lambda_i} \Bigr) +\!\displaystyle\int_0^T\!\!\!\mu_{i}(t)dt,\;\;\; \Lambda_{i} = \lambda_{i}^\mathrm{ext} + \sum_{l=1}^k\sum_{j \neq i}^n( v^i_l u^j_l  - u^i_l  v^j_l ),
\end{cases}
\end{equation}
with default set $ \mathcal{D}_t :=\{ i \; | \; \tau_i = t  \}= \{i \; | \; X_i(t\shortminus)-\Theta^n(t;\Delta \mathbf{L}^n(t);i) \leq 0,\;\tau_i\geq t  \}$ and jump sizes $\Delta \mathcal{L}^{n}_{l,i}(t) = \Xi^n_l(t,\Delta \mathbf{L}^n) - u_l^i\mathbf{1}_{i\in \mathcal{D}_t}$ given by the following `cascade condition'
\begin{equation}\label{particle_cascade_cond1}
\begin{cases}
\Delta \mathbf{L}^n_{v}(t) = \displaystyle \lim_{m\rightarrow n} \Delta^{n,(m)}_{t,v}, \quad \Delta^{n,(0)}_{t,v}:= \Xi(t;0,v), \;\;\; \Delta^{n,(m)}_{t,v}:= \Xi^n(t, \Delta^{n,(m-1)}_{t,\cdot}, v ), \\
\Xi^n(t;f,v):= \displaystyle \sum_{l=1}^k v_l 	\Xi^n_l(t;f), \quad 	\Xi^n_l(t;f):=  \sum_{j= 1}^{n} u_{l}^j  \mathbf{1}_{ \{ X_j(t\shortminus) \in [0, \Theta^n(t; f;j) ]  , \, t\leq \tau_j \} } ,\\
\Theta^n(t;f;j) := \displaystyle \log\Bigl(1 + \frac{1\!-\!R}{\Lambda_{j}}\! \int_0^{t\shortminus} \!\frac{\psi(T,s)}{\psi(T,0)}d\mathbf{L}^n_{v^j}(s) + \frac{1\!-\!R}{\Lambda_{j}}\frac{\psi(T,s)}{\psi(T,0)}f(v^j) \Bigr)  - F_{j}(t\shortminus),\vspace{-2pt}
\end{cases}
\end{equation}
where $\mathbf{L}_v^n(t):=\sum_{j=1}^nv\cdot u^j_l\mathbf{1}_{t\geq \tau_j}= \sum_{l=1}^kv_l(\mathcal{L}^n_{l,i}(t)+u_l^i\mathbf{1}_{t\geq \tau^i})$.
\end{proposition}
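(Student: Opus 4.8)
The plan is to split the proof into a deterministic change of variables — which recovers the drift, diffusion and initial data of \eqref{particle_sys} — and an analysis of the default cascade at each default time, which recovers the jump term $F_i$ and the selection rule \eqref{particle_cascade_cond1}. For the first part I would insert the factorisation \eqref{lambda_fact} into \eqref{eq:K_i} and regroup the interbank sums: the term $\psi(T,0)\bigl(\lambda_i^{\mathrm{ext}}+\sum_j[\lambda_{ij}-\lambda_{ji}]\bigr)$ collapses to $\psi(T,0)\Lambda_i$, while the running RFV loss obeys
\begin{equation*}
(1-R)\sum_{j\in\ncal\backslash\acal_t}\psi(T,\tau_j)\lambda_{ji}=(1-R)\sum_{l=1}^k v^i_l\int_0^t\psi(T,s)\,d\mathcal{L}^n_{l,i}(s)=\psi(T,0)\Lambda_i\bigl(e^{F_i(t)}-1\bigr),
\end{equation*}
using that $\mathcal{L}^n_{l,i}$ is the pure-jump process counting the defaults of $j\neq i$ weighted by the borrowing scores $u^j_l$. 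Since $\sigma_i$ is continuous on $[0,T]$, the martingale property of the stochastic exponential gives $\E[x_i(T)\mid\fcal_t]=x_i(t)e^{\int_t^T\mu_i\,ds}=x_i(0)\exp\bigl(\int_0^T\mu_i\,ds-\tfrac12\int_0^t\sigma_i^2\,ds+\int_0^t\sigma_i\,dW_i\bigr)$, so combining the displays yields $\E[x_i(T)\mid\fcal_t]-K_i(t)=\psi(T,0)\Lambda_i\,e^{F_i(t)}$, which is strictly positive by the standing assumption on $\Lambda_i$. Consequently $X_i$ in \eqref{dist-to-def} is well defined and
\begin{equation*}
X_i(t)=\log\Bigl(\tfrac{x_i(0)}{\psi(T,0)\Lambda_i}\Bigr)+\int_0^T\mu_i(s)\,ds-\tfrac12\int_0^t\sigma_i^2(s)\,ds+\int_0^t\sigma_i(s)\,dW_i(s)-F_i(t),
\end{equation*}
while $K_i(t)\le 0\iff X_i(t)\le 0$ (both $\E[x_i(T)\mid\fcal_t]$ and $\E[x_i(T)\mid\fcal_t]-K_i(t)$ being positive), so the default times and solvent sets $\acal_t$ agree in the two formulations. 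As $\mathcal{L}^n_{l,i}$, hence $F_i$, is piecewise constant with jumps only at the $\tau_j$, the last display is exactly the integrated form of the first two lines of \eqref{particle_sys} away from default times.

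The next step is to pin down the jumps. Between consecutive order statistics $\tau_{[k]}\le\tau_{[k+1]}$ of the at most $n$ default times, $\acal_t$ and $F_i$ are constant, so each $X_i$ is the pathwise-unique explicit solution of $dX_i=-\tfrac12\sigma_i^2\,dt+\sigma_i\,dW_i$ started from its left endpoint. At a default time $t$ one has $X_i(t)=X_i(t\shortminus)-\Delta F_i(t)$, and $\Delta F_i(t)$ depends on the set $\mathcal{D}_t$ of banks defaulting at $t$ only through $\Delta\mathcal{L}^n_{l,i}(t)=\sum_{j\in\mathcal{D}_t,\,j\neq i}u^j_l$ (explaining the $-u^i_l\mathbf{1}_{i\in\mathcal{D}_t}$ correction in \eqref{particle_sys}). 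Rearranging the logarithm defining $F_i$, one sees that a not-yet-defaulted bank $j$ has $X_j(t)=X_j(t\shortminus)-\Delta F_j(t)\le 0$ under an aggregate lending-weighted shock $f$ precisely when $X_j(t\shortminus)\in[0,\Theta^n(t;f;j)]$, with $\Theta^n$ as in \eqref{particle_cascade_cond1}; so a set $\mathcal{D}\subseteq\{j:\tau_j\ge t\}$ is \emph{consistent} at $t$ iff it equals the set of such $j$ for the shock it generates, and the first-round set (taking $f\equiv 0$) is exactly $\{j:X_j(t\shortminus)=0\}$. Because the coefficient of $f(v^j)$ inside the logarithm defining $\Theta^n$ is non-negative, $f\mapsto\Theta^n(t;f;j)$ is non-decreasing, hence so are $\Xi^n_l$, $\Xi^n$ and the iteration map in \eqref{particle_cascade_cond1}; therefore $\Delta^{n,(m)}_{t,\cdot}$ is monotone in $m$ and, there being only $n$ banks, stabilises at some $m\le n$. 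This stable value is $\Delta\mathbf{L}^n_v(t)$, and $\mathcal{D}_t$ is the associated default set, which is the \emph{smallest} consistent one, since every consistent set must contain the first-round defaults and be closed under the monotone iteration.

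Finally I would match this smallest default set with the one attached to $K^\uparrow$: on each interval between consecutive order statistics, \eqref{eq:K_i} is, for a fixed solvent set, an explicit componentwise strictly increasing function of ``how few banks have defaulted so far'' (since $R\le 1$, an extra default only lowers every connected capital), so componentwise greatest capital is equivalent to smallest default set at each default time — which is precisely the $X$-coordinate form of the fictitious-default description of $K^\uparrow$ given after Proposition~\ref{prop:exist}, with \eqref{particle_cascade_cond1} playing the role of that algorithm. Uniqueness then follows by induction over the at most $n$ default times: solve the SDE on $[0,\tau_{[1]})$, apply the uniquely determined cascade jump at $\tau_{[1]}$, and repeat; the construction terminates in at most $n$ steps and is unique at each step, so \eqref{particle_sys}--\eqref{particle_cascade_cond1} has a unique c\`adl\`ag solution, which by the above is the $X$ associated with $K^\uparrow$.

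Step~1 is routine but error-prone bookkeeping (signs, the $i=j$ exclusions, the conditional expectation of the GBM, the exact role of $\Lambda_i$). The genuine difficulty lies in Step~2 together with the identification: establishing monotonicity and finite-step stabilisation of the cascade iteration, and — above all — making rigorous that selecting the \emph{minimal} consistent default set at each of the finitely many default times yields the componentwise \emph{greatest} global clearing capital, which relies on the solvent set being piecewise constant in time and on the explicit monotone dependence of \eqref{eq:K_i} on the collection of banks that have defaulted so far. Some care with adaptedness (the $\tau_j$ are stopping times and $\mathcal{D}_t$ is $\fcal_t$-measurable) and with genuinely simultaneous defaults will also be needed.
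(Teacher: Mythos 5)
Your proposal is correct and follows essentially the same route as the paper's proof in Supplemental~\ref{proof:prop:particle}: the same change of variables via the explicit conditional expectation $\E[x_i(T)\,|\,\fcal_t]=x_i(t)e^{\int_t^T\mu_i(s)ds}$ and the factorisation \eqref{lambda_fact}, followed by an induction over the at most $n$ default times in which the monotone cascade iteration stabilises after at most $n$ rounds. Your ``smallest consistent default set'' framing is just a least-fixed-point restatement of the paper's argument, which shows each iterate $\mathcal{D}^{(m)}_{\varsigma_k}$ is contained in the true default set of $K^\uparrow$ and then rules out a strictly larger default set by exhibiting a clearing solution with greater capital, contradicting maximality of $K^\uparrow$.
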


The proof is found in Supplemental~\ref{proof:prop:particle}, but the intuition for \eqref{particle_sys}-\eqref{particle_cascade_cond1} is as follows. When bank $j$ defaults, each bank $i\neq j$ has a remaining exposure $\psi(T,\tau_j)v^i\cdot u^j$ of which only a fraction $R<1$ is recovered. Thus, the capital takes a hit of $(1-R)\psi(T,\tau_j)v^i\cdot u^j$. Since we are looking at distances-to-default, the dynamics in \eqref{particle_sys} are concerned with the evolution of capital over total liabilities, so the loss of capital translates to a downward jump inversely proportional to the net liabilities $\psi(0,T)\Lambda_i$. At first sight, this may seem to suggest that large external liabilities $\lambda^{\text{ext}}_i$ will dampen contagion, however, notice the dual effect of this:~all else equal, large values of $\lambda^{\text{ext}}_i$ also push the initial profile of the system much closer to the default barrier, hence making contagion more likely and its consequences more severe even for smaller downward jumps.

Looking at the contagion processes $\mathcal{L}^n_{l,i}$, the default of bank $j$ contributes a jump of $u_l^j$ to each $\mathcal{L}^n_{l,i}$. Thus, each bank $i \neq j$ feels an aggregate stress through the resulting jump in $\sum_{l=1}^kv_l^i\mathcal{L}_{l,i}^n$, where the weights $v_1^i,\ldots,v_k^i$ give the exposure of bank $i$ to the $k$ characteristics determining the network of liabilities. The cascade condition \eqref{particle_cascade_cond1} arrives at the size of these jumps iteratively, by starting from $\Delta^{n,(0)}_{t,v^i}$, which gives the contribution to $\Delta \mathcal{L}_{l,i}^n(t)$ solely from banks $j\neq i$ defaulting at time $t$ without considering contagion. Next, $\Delta^{n,(1)}_{t,v^i}$ takes into account any defaults resulting from a jump of size $\Delta^{n,(0)}_{t,v^i}$, and so on, comparably to the so-called fictitious default algorithm briefly discussed after Proposition \ref{prop:exist} in Section \ref{sec:dynamic}.

\paragraph{3.3 The mean-field model}
Given a financial system with $n$ banks, we artificially construct larger and larger systems of size $N=pn$ for $p\geq1$ with the same network structure. Specifically, for our given $n\geq 1$, we assume the pairs $\{(u^i,v^i)\}_{i=1}^n$ are drawn from some distribution $\hat{\varpi}$ on $\mathbb{R}^k\times\mathbb{R}^k$, and then we define\vspace{-3pt}
\begin{equation}\label{eq:lambda-N}
\lambda_{ij}^N:= \frac{n}{N} \hat{u}^i \cdot  \hat{v}^j = \frac{n}{N} \sum_{l=1}^{k}\hat{u}^i_l  \hat{v}^j_l, \quad\text{for}\quad  N \geq 1,\vspace{-2pt}
\end{equation}
by taking $\{(\hat{u}^i , \hat{v}^i) \}_{i=1}^\infty$ to be i.i.d.~samples from $\hat{\varpi}$. For any $N\geq1$, we are thus creating $p=N/n$ randomly sampled copies of each bank from the original $n$-bank system, each with their liability positions scaled down by $1/p=n/N$. Since the true network is not fully observable, we can think of this as choosing our approximate network structure \eqref{lambda_fact} by fixing a distribution $\hat{\varpi}$, e.g.,~based on generating samples of the true network. If one prefers a simple `multitype' structure with homogeneity within groups, one could alternatively start from \eqref{lambda_fact} and define $\hat{\varpi}$ to be the corresponding empirical measure. 

Let $\hat{\varpi}$ be a probability measure on $\mathbb{R}^k\times\mathbb{R}^k$ with the above interpretation, and restrict to heterogeneous parameters of the form $f_i=f_{u^i,v^i}$ for $f_i=\lambda^{\mathrm{ext}}_i,\mu_i,\sigma_i$. Sending $N\rightarrow \infty$ in the $N$-bank particle system \eqref{particle_sys} given by the interbank liabilities \eqref{eq:lambda-N}, we are then led to formulate the mean-field problem
\begin{equation}\label{CMV}
\begin{cases}
dX_{u,v}(t) =  -\tfrac{1}{2}\sigma^2_{u,v}(t)dt + \!\sigma_{u,v}(t) \bigl(\sqrt{1-\rho^2}dB(t)+\rho dB^0(t) \bigr) - d F_{u,v}(t), \vspace{3pt} \\
F_{u,v}(t)=\displaystyle\log \Bigl(  1 + \frac{1\!-\!R}{\Lambda_{u,v} } \displaystyle \sum_{l=1}^{k} v_l \!\int_0^t \frac{\psi(T,s)}{\psi(T,0)} d\mathcal{L}_l(s)   \Big),  \\
\mathcal{L}_l(t)  = \!\displaystyle\int_{\mathbb{R}^k\times\mathbb{R}^k}  \!u_l \mathbb{P}( t\geq \tau_{u,v} \, | \, B_0) d\hat{\varpi}(u,v), \;\;\; \tau_{u,v} =\inf \{ t\geq 0 \; | \; X_{u,v}(t) \leq 0 \},\\
X_{u,v}(0) =\displaystyle\log\Bigl( \frac{x_{u,v}(0)}{ \psi(T,0)\Lambda_{u,v}} \Bigr) +\int_0^T\!\!\!\!\mu_{u,v}(t)dt,\;\;\Lambda_{u,v} = \!\lambda^{\mathrm{ext}}_{u,v} + \displaystyle\sum_{l=1}^k ( v_l  \mathbb{E}[u_l]  \!- \!u_l  \mathbb{E}[v_l]  ),
\end{cases}
\end{equation}
where $B,B_0$ are independent Brownian motions. In this mean-field formulation, the heterogeneity is now modelled by the distribution $\hat{\varpi}$ of the pairs $(u,v)$. Formally, an `infinitesimal' bank indexed by $(u,v)$ has liabilities proportional to $u \cdot \hat{v}  =\sum_{l=1}^k  u_l \hat{v}_l$ towards infinitesimal banks indexed by $(\hat{u},\hat{v})$, where each $u_l\hat{v}_l $ gives the exposure to the $k$ characteristics determining the network structure. The value $\mathbb{P}( t\geq \tau_{u,v} \, | \, B_0)$ can be interpreted as the proportion of infinitesimal banks indexed by $(u,v)$ that have defaulted by time $t$. Aggregating the contagion, an infinitesimal bank indexed by $(\hat{u},\hat{v})$ has thus felt an accumulated stress proportional to $\hat{v}_l\mathcal{L}_l$ from its exposure to the $l$'th characteristic and a stress proportional to $\sum_{l=1}^k\hat{v}_l\mathcal{L}_l$ overall.

Under suitable assumptions, we demonstrate in a separate work \cite{fein-soj} that the McKean--Vlasov problem \eqref{CMV} indeed emerges as the mean-field formulation of the particle system \eqref{particle_sys}, when sending the number of banks to infinity. Here, we instead concentrate our efforts on introducing the mean-field model and deriving an intuitive condition that rules out singularities in the dynamics (see Theorem \ref{thm_cont} below).

\begin{proposition}[Convergence]\label{prop:convergence}
	Define the empirical measures $\mathbf{P}^N:=N^{-1}\sum_{i=1}^N\delta_{\hat{u}^i} \otimes \delta_{\hat{v}^i} \otimes \delta_{X^i}$ and $\mathbf{P}^N_0:=N^{-1}\sum_{i=1}^N\delta_{\hat{u}^i} \otimes \delta_{\hat{v}^i} \otimes \delta_{X^i(0)}$, and let Assumption \ref{assump:mean-field} hold. Under weak convergence induced by Skorokhod's M1 topology, every subsequence of $(\mathbf{P}^N)_{N\geq1}$ has a further subsequence converging in law to a random probability measure $\mathbf{P}^*$ such that
	\begin{equation}\label{eq:relaxedCMV}
	\langle \mathbf{P}^*, f\varphi \rangle = \int_{\mathbb{R}^k\times \mathbb{R}^k }f(u,v) \mathbb{E}[\varphi(X^*_{u,v})\,|\,B_0,\mathbf{P}^*]d\hat{\varpi}(u,v),
	\end{equation}
	for measurable $f:\mathbb{R}^k\times\mathbb{R}^k\rightarrow \mathbb{R}$ and  $\varphi:D_{\mathbb{R}}[0,T]\rightarrow \mathbb{R}$,
	where $X^*_{u,v}$ satisfies the dynamics in \eqref{CMV} with $\mathbb{P}( t \geq \!\tau_{u,v} \, |\, B_0)$ replaced by $\mathbb{P}( t\geq \!\tau^*_{u,v}  \,| \,B_0, \mathbf{P}^*)$ for $\tau^*_{u,v}=\inf\{t>0 : X_{u,v}^*\leq 0\}$. Under the conditions of Theorem \ref{thm_cont} below, there is a unique limit $\mathbf{P}^*$ and the additional conditioning on $\mathbf{P}^*$ can be dropped both in \eqref{eq:relaxedCMV} and in the dynamics of $X^*_{u,v}$.
\end{proposition}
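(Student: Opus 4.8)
The plan is to establish Proposition~\ref{prop:convergence} by the now-classical three-step programme for mean-field limits of diffusions absorbed at a moving boundary with monotone contagion feedback and common noise (the detailed argument is carried out in \cite{fein-soj}; here I only outline it): (i) tightness of the laws of the random empirical measures $(\mathbf{P}^N)_{N\ge1}$ in $\mathcal{P}(\mathbb{R}^k\times\mathbb{R}^k\times D_{\mathbb{R}}[0,T])$, with $D_{\mathbb{R}}[0,T]$ carrying the $M1$ topology; (ii) identification of every subsequential limit $\mathbf{P}^*$ as a solution of the relaxed fixed-point relation \eqref{eq:relaxedCMV}; and (iii) removal of the extra conditioning and promotion to full convergence under the hypotheses of Theorem~\ref{thm_cont}. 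Throughout, I use Proposition~\ref{prop:particle} to represent the greatest clearing solution of the $N$-bank system built from \eqref{eq:lambda-N} as the càdlàg particle system \eqref{particle_sys}, and I note that the scaling in \eqref{eq:lambda-N} is precisely what keeps the aggregate contagion processes and the net-liability constants $\Lambda$ of order one, so that a law of large numbers across banks identifies their limits with the corresponding objects in \eqref{CMV}.

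\emph{Step (i): tightness.} I would split each particle as $X^i = Y^i - F^i$, where $Y^i(t) = X^i(0) - \tfrac12\int_0^t\sigma_i^2 + \int_0^t\sigma_i\,dW_i$ is the ``free'' continuous part and $F^i$ is the non-decreasing loss in \eqref{particle_sys}. Using the integrability and non-degeneracy hypotheses of Assumption~\ref{assump:mean-field} (bounded $\mu_i,\sigma_i$ as functions of $(u^i,v^i)$, $\Lambda_{u^i,v^i}$ bounded below, and moment bounds on $x_{u^i,v^i}(0)$ and on $u,v$ under $\hat\varpi$), the processes $Y^i$ are tight in the stronger $J1$ topology with uniformly integrable oscillations and uniform moment bounds, while the scaling \eqref{eq:lambda-N} makes each contagion process $\mathcal{L}^n_{l,i}$ a re-weighted sum of indicators $\mathbf{1}_{t\ge\tau_j}$ that is non-decreasing and bounded in $L^1$ uniformly in $N$ by the law of large numbers; hence each $F^i$ is non-decreasing, null at $0$, and uniformly bounded. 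Since the $M1$ oscillation of a sum of a function and a monotone function is controlled by twice the ordinary oscillation of the first summand, $(X^i)$ is tight in $M1$. Combined with the i.i.d.\ structure of the $(\hat u^i,\hat v^i)$ (Glivenko--Cantelli) and an exchangeability/compact-containment argument for random measures, this gives tightness of the laws of $\mathbf{P}^N$, jointly with the aggregate contagion processes and the common noise $B_0$; in particular $\mathbf{P}^N_0$ converges weakly to $\hat\varpi\otimes\mathrm{Law}(X_{u,v}(0))$ by the convergence of the initial data in \eqref{particle_sys} to those in \eqref{CMV}.

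\emph{Step (ii): identification.} Along a convergent subsequence, I pass to a common probability space (Skorokhod) so that $\mathbf{P}^N\to\mathbf{P}^*$, the aggregate contagion processes, and $B_0$ converge almost surely. The $(u,v)$-marginal of $\mathbf{P}^*$ is $\hat\varpi$, and the drift--martingale part of the dynamics in \eqref{particle_sys} passes to the limit by the usual martingale-problem computation, where conditioning on the common noise $B_0$ and --- since uniqueness is not yet available --- on the limit $\mathbf{P}^*$ produces the conditional expectation $\mathbb{E}[\varphi(X^*_{u,v})\mid B_0,\mathbf{P}^*]$ of \eqref{eq:relaxedCMV}. The only delicate point is the contagion term, because $t\mapsto\mathbf{1}_{t\ge\tau_j}$ is not an $M1$-continuous functional of the path: here I exploit that each $\mathcal{L}^n_{l,i}$ is non-decreasing and uniformly bounded, hence converges in $M1$ to a non-decreasing limit $\mathcal{L}^*_l$ which, at every continuity point $t$, one identifies with $\int u_l\,\mathbb{P}(t\ge\tau^*_{u,v}\mid B_0,\mathbf{P}^*)\,d\hat\varpi(u,v)$ by sandwiching $\mathbf{1}_{t>\tau^*}\le\liminf_N\mathbf{1}_{t\ge\tau^N_j}\le\limsup_N\mathbf{1}_{t\ge\tau^N_j}\le\mathbf{1}_{t\ge\tau^*}$ on the a.s.\ event and at continuity points of the barrier crossing, then upgrading to convergence in mean via the uniform $L^1$-bound. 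As a monotone function has at most countably many discontinuities, this pins down $\mathcal{L}^*_l$ as a Stieltjes integrator; feeding it through the continuous maps $\mathcal{L}\mapsto F_{u,v}$ and $F_{u,v}\mapsto X_{u,v}$ shows that $X^*_{u,v}$ satisfies the dynamics of \eqref{CMV} with $\mathbb{P}(t\ge\tau_{u,v}\mid B_0)$ replaced by $\mathbb{P}(t\ge\tau^*_{u,v}\mid B_0,\mathbf{P}^*)$, which is \eqref{eq:relaxedCMV}.

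\emph{Step (iii): uniqueness and de-conditioning.} Under the hypotheses of Theorem~\ref{thm_cont}, the loss processes $\mathcal{L}_l$ of any relaxed solution are continuous in time, so along such a solution $t\mapsto\mathbf{1}_{t\ge\tau}$ is continuous $\hat\varpi$-almost everywhere and the relaxation is vacuous: every relaxed solution solves \eqref{CMV}, which, by the contraction estimate underpinning Theorem~\ref{thm_cont} (the map $\mathcal{L}\mapsto\mathcal{L}$ is a strict contraction in the uniform norm on a short interval, then iterated over $[0,T]$), has a unique solution given $B_0$. Uniqueness forces $\mathbf{P}^*$ to be a measurable function of $B_0$, so the conditioning on $\mathbf{P}^*$ in \eqref{eq:relaxedCMV} and in the dynamics is redundant; and a single limit point upgrades subsequential convergence to $\mathbf{P}^N\Rightarrow\mathbf{P}^*$. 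I expect Step~(ii), and specifically the identification of the contagion limit, to be the main obstacle: one must show that no probability mass is lost or gained at a boundary crossing in the limit --- precisely the phenomenon that forces the relaxed formulation in general --- while simultaneously keeping track of the two layers of conditioning, on the common noise $B_0$ and on the a priori random limiting measure $\mathbf{P}^*$.
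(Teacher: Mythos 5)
The paper does not actually prove Proposition~\ref{prop:convergence} in this document: immediately after the statement it defers entirely to the companion work, citing \cite[Theorems 2.4 and 2.6]{fein-soj}. Your proposal is therefore not comparable against an internal argument; what you have written is a plausible outline of the standard tightness--identification--uniqueness programme for absorbed diffusions with monotone contagion feedback and common noise in the M1 topology, and it is consistent with the approach one expects in the cited reference (to which you also defer the details). One inaccuracy worth flagging: in your Step (iii) you invoke ``the contraction estimate underpinning Theorem~\ref{thm_cont},'' but Theorem~\ref{thm_cont} in this paper is not proved by a contraction argument at all --- it is proved via the jump-size constraint of Proposition~\ref{fp_constraint}, showing that under \eqref{smallness_cond} the only admissible jump is zero, which yields continuity of any a priori c\`adl\`ag solution but not, by itself, uniqueness. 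The uniqueness of the limit $\mathbf{P}^*$ under those conditions is likewise only asserted here and established in \cite{fein-soj}, so your sketch of how uniqueness follows is a conjecture about the companion paper rather than a reconstruction of anything proved in this one.
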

The proof of Proposition \ref{prop:convergence} is given in \cite[Theorems 2.4 and 2.6]{fein-soj}.
Using the terminology of \cite{fein-soj,LS18a}, the limit points in Proposition \ref{prop:convergence} are relaxed solutions of \eqref{CMV}. 
\begin{assumption}[Structural conditions]\label{assump:mean-field}
	Let $f_{u,v}(t)$ be continuous in $(u,v,t)$, for $f=\sigma,\mu, \lambda^{\emph{\text{ext}}}$, and let $ \sigma_{u,v}(\cdot)$ be $\alpha$-H{\"o}lder continuous with $\alpha >1/2$. As in \eqref{eq:lambda-N}, let $\{(\hat{u}^i,\hat{v}^i)\}_{i=1}^\infty$ be i.i.d.~samples from $\hat{\varpi}$ with compact support such that $N^{-1}\sum_{i=1}^N\delta_{\hat{u}^i} \otimes \delta_{\hat{v}^i} $ converges weakly to $\hat{\varpi}$. Finally, let $x_{i}(0)=h(u^i,v^i,\xi^i)$ be a measurable function of $(u^i,v^i,\xi^i)$, for i.i.d.~samples $\{\xi^i \}_{i=1}^\infty$ such that $\mathbf{P}^N_0$ converges weakly to a probability measure $V_0(x|u,v)dxd\varpi(u,v)$ on $\mathbb{R} \times \mathbb{R}^k\times\mathbb{R}^k$, where $V_0(\cdot|u,v)$ is a continuous density vanishing at $0$.
\end{assumption}

While the finite system \eqref{particle_sys} has contagion occurring as jumps dictated by the cascade condition \eqref{particle_cascade_cond1}, one may hope for a smoother mean-field problem. Indeed, it turns out  we have a simple criterion for the mean-field to evolve continuously, meaning that contagion events are smoothed out in time. Intuitively, the default of an infinitesimal bank indexed by $(u,v)$ causes a stress proportional to $u \cdot \hat{v} \geq 0$ for banks indexed by $(\hat{u},\hat{v})$, so if the density of infinitesimal banks indexed by $(u,v)$ is sufficiently inversely proportional to this, the overall effect should be controlled. Before stating the result, we write $\hat{\varpi}=\mathrm{Law}(\mathbf{u},\mathbf{v})$ and let $S(\mathbf{u},\mathbf{v})$ denote its support, while we let $S(\mathbf{u})$ and $S(\mathbf{v})$ denote the support of $\mathbf{u}$ and $\mathbf{v}$, respectively. From here on we assume $S(\mathbf{u})$ and $S(\mathbf{v})$ are compact with $u\cdot v \geq 0$ for all $v\in S(\mathbf{v})$ and $u\in S(\mathbf{u})$ (as can be seen to follow from Assumption \ref{assump:mean-field}, noting that $\hat{u}^i \cdot \hat{v}^j = N \lambda_{ij}^N/n \geq 0$ in the finite system).
{A concrete example of such a system and related conditions based on a $k=4$ type network structure is provided in Supplemental~\ref{sec:4example}.

\begin{theorem}[Continuous mean-field]\label{thm_cont}
Let $V_0(\cdot|u,v)$ be the density of $X_{u,v}(0)$ in \eqref{CMV}. If
\begin{equation}\label{smallness_cond}
\Vert V_{0}(\cdot|u,v) \Vert_{\infty} < \frac{\Lambda_{u,v}}{1\!-\!R} \frac{1}{\max\{u\cdot \hat{v} \; | \;  \hat{v}\in S(\mathbf{v}) \;\mathrm{s.t.}\; u\cdot \hat{v} >0\}} \quad  \forall (u,v) \in S(\mathbf{u},
\mathbf{v}),
\end{equation}
then any a priori c\`adl\`ag solution to \eqref{CMV} is continuous in time (here $1/\max \emptyset=+\infty$).
\end{theorem}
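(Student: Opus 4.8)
The plan is to argue by contradiction: assuming an a priori c\`adl\`ag solution of \eqref{CMV} has a discontinuity, I would localise the jump, reduce it to a self‑consistent inequality for its size, and show that \eqref{smallness_cond} forces that size to vanish. Throughout, work conditionally on the common noise $B^0$, so the loss coordinates $\mathcal L_l$ and the profiles $F_{u,v}$ are deterministic; abbreviate $\mathcal L_v(t):=\sum_{l=1}^k v_l\mathcal L_l(t)=\int(v\cdot u)\,\mathbb P(t\ge\tau_{u,v}\mid B^0)\,d\hat\varpi(u,v)$, which is non‑decreasing for $v\in S(\mathbf v)$ by the assumed $u\cdot v\ge0$ on the supports, so that $F_{u,v}(t)=\log\!\big(1+\tfrac{1-R}{\Lambda_{u,v}}\int_0^t(\psi(T,s)/\psi(T,0))\,d\mathcal L_v(s)\big)$. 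Since the drift and martingale parts of every $X_{u,v}$ are continuous and $F_{u,v}$ jumps precisely when $\mathcal L_v$ does, a discontinuity means there is $t_0\in(0,T]$ with $\Delta\mathcal L_l(t_0)\neq0$ for some $l$ (there is no jump at $0$ because $V_0(\cdot\mid u,v)$ vanishes at $0$, whence $\mathbb P(\tau_{u,v}=0\mid B^0)=\mathbb P(X_{u,v}(0)\le0)=0$). I would then introduce the sub‑probability measure $\mu(du,dv):=\mathbb P(\tau_{u,v}=t_0\mid B^0)\,\hat\varpi(du,dv)$ of types defaulting exactly at $t_0$, so that $\Delta\mathcal L_l(t_0)=\int u'_l\,d\mu(u',v')$ and $\Delta\mathcal L_v(t_0)=\int(v\cdot u')\,d\mu(u',v')\ge0$ for $v\in S(\mathbf v)$, and record that, using $\log(1+x)\le x$, continuity of $\psi(T,\cdot)$ at $t_0$, $\psi(T,t_0)\le\psi(T,0)$, $\Lambda_{u,v}>0$ and non‑negativity of $\mathcal L_v$,
\begin{equation*}
\Delta F_{u,v}(t_0)\ \le\ \frac{1-R}{\Lambda_{u,v}}\,\Delta\mathcal L_v(t_0).
\end{equation*}

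The technical heart is a density bound for the absorbed distances‑to‑default: conditionally on $B^0$, the law of $X_{u,v}(t_0\shortminus)$ on the event $\{\tau_{u,v}\ge t_0\}$ has a Lebesgue density bounded by $\|V_0(\cdot\mid u,v)\|_\infty$, uniformly in time. To prove this I would write $X_{u,v}=Y_{u,v}-F_{u,v}$ with $Y_{u,v}(t):=X_{u,v}(0)-\tfrac12\int_0^t\sigma_{u,v}^2(s)\,ds+\int_0^t\sigma_{u,v}(s)\big(\sqrt{1-\rho^2}\,dB(s)+\rho\,dB^0(s)\big)$ a continuous process. Given $B^0$, the variable $Y_{u,v}(t_0)$ is the sum of the independent variable $X_{u,v}(0)$ (with density $V_0(\cdot\mid u,v)$) and a term that is $B^0$‑measurable plus an independent Gaussian, so its conditional density is $\le\|V_0(\cdot\mid u,v)\|_\infty$ (convolution with a probability law followed by a translation). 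Since $F_{u,v}$ is non‑decreasing with $F_{u,v}(0)=0$, we have $\{\tau_{u,v}\ge t_0\}=\{Y_{u,v}(s)>F_{u,v}(s)\ \forall s<t_0\}\subseteq\{Y_{u,v}(s)>0\ \forall s<t_0\}$ and $X_{u,v}(t_0\shortminus)=Y_{u,v}(t_0)-F_{u,v}(t_0\shortminus)$, so restricting to this sub‑event and translating can only decrease the density; the bound follows. On $\{\tau_{u,v}\ge t_0\}$ one has $X_{u,v}(t_0\shortminus)\ge0$, and the only jump of $X_{u,v}$ at $t_0$ is $-\Delta F_{u,v}(t_0)$, whence $\{\tau_{u,v}=t_0\}=\{\tau_{u,v}\ge t_0,\ X_{u,v}(t_0\shortminus)\in[0,\Delta F_{u,v}(t_0)]\}$; integrating the density bound over $[0,\Delta F_{u,v}(t_0)]$ and combining with the previous display gives
\begin{equation*}
\mathbb P(\tau_{u,v}=t_0\mid B^0)\ \le\ \|V_0(\cdot\mid u,v)\|_\infty\,\Delta F_{u,v}(t_0)\ \le\ \frac{(1-R)\,\|V_0(\cdot\mid u,v)\|_\infty}{\Lambda_{u,v}}\,\Delta\mathcal L_v(t_0).
\end{equation*}

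To close the loop I would set $M:=\max_{v\in S(\mathbf v)}\Delta\mathcal L_v(t_0)=\max_{v\in S(\mathbf v)}\int(v\cdot u')\,d\mu(u',v')\in[0,\infty)$, attained at some $v^*\in S(\mathbf v)$ by compactness of $S(\mathbf v)$ and continuity in $v$. Inserting the last display into $d\mu$ and then using $\Delta\mathcal L_{v'}(t_0)\le M$ for $\hat\varpi$‑a.e.\ $(u',v')$,
\begin{equation*}
M=\int(v^*\!\cdot u')\,d\mu(u',v')\ \le\ M\int(v^*\!\cdot u')\,\frac{(1-R)\,\|V_0(\cdot\mid u',v')\|_\infty}{\Lambda_{u',v'}}\,d\hat\varpi(u',v').
\end{equation*}
By \eqref{smallness_cond}, for $\hat\varpi$‑a.e.\ $(u',v')$ the integrand is bounded by $(v^*\!\cdot u')/\max\{u'\!\cdot\hat v:\hat v\in S(\mathbf v),\,u'\!\cdot\hat v>0\}$, which is strictly less than $1$: with the convention $0\cdot(+\infty)=0$, either $v^*\!\cdot u'=0$, or $v^*\!\cdot u'$ is itself one of the positive values entering the maximum, since $v^*\in S(\mathbf v)$. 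As $\hat\varpi$ is a probability measure, the integral is then $<1$, so $M\le cM$ with $c<1$ and $M<\infty$, forcing $M=0$. But $M=0$ makes the right‑hand side of the density bound vanish, so $\mu=0$ and $\Delta\mathcal L_l(t_0)=0$ for every $l$, contradicting the choice of $t_0$. Hence $\mathcal L_l$ is continuous for each $l$, so each $F_{u,v}$ and therefore each $X_{u,v}$ is continuous, i.e.\ the solution is continuous in time.

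The main obstacle is the density bound of the second paragraph: one must show the absorbed distance‑to‑default, at $t_0\shortminus$ and conditionally on the common noise, carries a density controlled by the \emph{initial} density, \emph{uniformly} in time — and the decomposition $X_{u,v}=Y_{u,v}-F_{u,v}$ is exactly what makes this go through, isolating a continuous Gaussian‑type process (whose one‑dimensional marginals are convolutions of $V_0$ with a probability kernel, hence $\le\|V_0\|_\infty$) from a non‑decreasing, common‑noise‑measurable loss, with monotonicity of $F_{u,v}$ used to dominate the absorption event by that of the cleaner process $Y_{u,v}$. The remaining points — identifying the types defaulting exactly at $t_0$ with $\{X_{u,v}(t_0\shortminus)\in[0,\Delta F_{u,v}(t_0)]\}$, linearising the logarithm, measurability of $(u,v)\mapsto\|V_0(\cdot\mid u,v)\|_\infty$, and the $1/\max\emptyset=+\infty$ convention — should be routine bookkeeping.
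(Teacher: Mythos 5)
Your proof is correct and follows essentially the same strategy as the paper's: the jump-size fixed-point identity (which the paper isolates as Proposition~\ref{fp_constraint} and you re-derive inline via the measure $\mu$ of types defaulting exactly at $t_0$), the linearisation $\Delta F_{u,v}(t_0)\leq\frac{1-R}{\Lambda_{u,v}}\Delta\mathcal{L}_v(t_0)$, and the contraction argument from \eqref{smallness_cond} forcing $\max_{v\in S(\mathbf{v})}\Delta \mathbf{L}_v(t_0)=0$. The one place you go beyond the paper is in actually proving the density bound $\mathbb{P}(X_{u,v}(t_0\shortminus)\in[0,\theta],\,t_0\leq\tau_{u,v}\mid B_0)\leq\|V_0(\cdot|u,v)\|_\infty\,\theta$, which the paper invokes without comment in passing to the second line of its displayed estimate; your decomposition $X_{u,v}=Y_{u,v}-F_{u,v}$, with $Y_{u,v}$ continuous and its conditional one-dimensional marginals given $B_0$ being translated convolutions of $V_0$ with a probability kernel, correctly supplies that justification.
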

\begin{proof}
Fix $t\geq0$. With $\Xi$ and $\Theta$ defined in \eqref{the_map_Xi} of Proposition \ref{fp_constraint} below, we can estimate
\vspace{-3pt}\begin{align*}\label{eq:the_map_Xi}
\Xi(t;f,v) & \leq \sum_{l=1}^k v_l \int_{\mathbb{R}^k\times\mathbb{R}^k} \hat{u}_{l}  \mathbb{P} \bigl( X_{\hat{u},\hat{v}}(t\shortminus ) \in [0, \Theta(t; f, \hat{v} ) ]  , \, t\leq \tau  \mid B_0\bigr) d\hat{\varpi}(\hat{u},\hat{v}) \\
&\leq  \max_{\tilde{v}\in S(\mathbf{v})} f(\tilde{v}) \int_{\mathbb{R}^k\times\mathbb{R}^k} v \cdot \hat{u}  \frac{1-R}{\Lambda_{\hat{u},\hat{v}}}\Vert V_0(\cdot | \hat{u},\hat{v})   \Vert_{\infty} d\hat{\varpi}(\hat{u},\hat{v}) < \max_{\tilde{v}\in S(\mathbf{v})} f(\tilde{v})
\end{align*}
for all $v\in S(\mathbf{v})$, by \eqref{smallness_cond}.
Now let $\mathbf{L}_v(t):= \sum_{l=1}^kv_l \mathcal{L}_l(t)$ and take $f(v):=\Delta \mathbf{L}_v(t)$. Then it follows from the above bound and \eqref{eq:fp_constraint} in Proposition \ref{fp_constraint} below that
\[
\max_{v\in S(\mathbf{v})} \Delta \mathbf{L}_v (t)  = \max_{v\in S(\mathbf{v})} 	\Xi(t;\Delta \mathbf{L},v)   < 	\max_{v\in S(\mathbf{v})} \Delta \mathbf{L}_v (t),
\]
where we have used the compactness of $S(\mathbf{v})$ along with continuity in $v$. Therefore, we deduce $ \Delta \mathbf{L}_\cdot(t)  \equiv 0$, for any $t\geq 0$, showing that the dynamics in \eqref{CMV} are continuous.
\end{proof}

The proof of Theorem \ref{thm_cont} relied on the following observation. 
\begin{proposition}[Jump size constraint]\label{fp_constraint} Setting $\mathbf{L}_v(t):= \sum_{l=1}^kv_l \mathcal{L}_l(t)$, any c\`adl\`ag solution to the mean-field problem \eqref{CMV} satisfies the fixed point constraint
\begin{equation}\label{eq:fp_constraint}
\Delta\mathbf{L}_v(t) = \Xi (t; \Delta\mathbf{L}, v) \quad \text{for all} \quad v\in S(\mathbf{v}),
\end{equation}
for any $t\in[0,T]$, where $\Xi(t;f,v) :=  \sum_{l=1}^k v_l \Xi_l(t;f)$ with
\vspace{-3pt}\begin{equation}\label{the_map_Xi}
\begin{cases}
\Xi_l(t;f) :=\displaystyle  \int\!u_{l}  \mathbb{P} \bigl( X_{u,v}(t\shortminus) \in [0, \Theta(t; f, u,v ) ]  , \, t\leq \tau_{u,v}  \mid B_0\bigr) d\hat{\varpi}(u,v) \\[4pt]
\displaystyle \Theta(t;f,u,v):= \log \Bigl(  1 + \displaystyle \frac{1\!-\!R}{\Lambda_{u,v}} \!\int_0^{t\shortminus} \frac{\psi(T,s)}{\psi(T,0)} d\mathbf{L}_v(s)   +\frac{1\!-\!R}{\Lambda_{u,v}}\frac{\psi(T,s)}{\psi(T,0)} f(v)  \Big) -  F_{u,v}(t\shortminus)
\end{cases}
\end{equation}
\end{proposition}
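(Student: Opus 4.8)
The plan is to unwind the definitions in \eqref{CMV} and \eqref{the_map_Xi} and show that, for an a priori c\`adl\`ag solution, the jump of $\mathbf{L}_v$ at a fixed time $t$ is forced to equal $\Xi(t;\Delta\mathbf{L},v)$, because the loss process $\mathcal{L}_l(t)=\int u_l\,\mathbb{P}(t\geq\tau_{u,v}\mid B_0)\,d\hat{\varpi}(u,v)$ is self-consistently tied to the default times. A first observation is that $t\mapsto X_{u,v}(t)$ has only downward jumps, all of them coming from $F_{u,v}$: first-passage events are increasing in $t$, so each $\mathcal{L}_l$ is non-decreasing, and with $\psi\geq0$, $u_l,v_l\geq0$, $R\leq1$ and $\Lambda_{u,v}>0$ the process $F_{u,v}$ is non-decreasing, while the drift and diffusion terms of $X_{u,v}$ are continuous; hence $\Delta X_{u,v}(t)=-\Delta F_{u,v}(t)\leq0$.

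The key algebraic step is the identity $\Theta(t;\Delta\mathbf{L},u,v)=\Delta F_{u,v}(t)$. Setting $A_{u,v}(t):=\tfrac{1-R}{\Lambda_{u,v}}\sum_l v_l\int_0^t\tfrac{\psi(T,s)}{\psi(T,0)}\,d\mathcal{L}_l(s)$, one has $F_{u,v}(t)=\log(1+A_{u,v}(t))$, $F_{u,v}(t\shortminus)=\log(1+A_{u,v}(t\shortminus))$, and $\Delta A_{u,v}(t)=\tfrac{1-R}{\Lambda_{u,v}}\tfrac{\psi(T,t)}{\psi(T,0)}\Delta\mathbf{L}_v(t)$ with $\Delta\mathbf{L}_v(t)=\sum_l v_l\Delta\mathcal{L}_l(t)$. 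Substituting $f=\Delta\mathbf{L}$ into the definition of $\Theta$ in \eqref{the_map_Xi} and using $A_{u,v}(t\shortminus)+\Delta A_{u,v}(t)=A_{u,v}(t)$ gives $\Theta(t;\Delta\mathbf{L},u,v)=\log(1+A_{u,v}(t))-F_{u,v}(t\shortminus)=\Delta F_{u,v}(t)$, where the stray $\psi(T,s)$ in the last line of \eqref{the_map_Xi} is read as $\psi(T,t)$ since it sits outside the integral.

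Next I would characterise the banks defaulting exactly at time $t$. Conditionally on $B_0$, one has $\tau_{u,v}=t$ if and only if $X_{u,v}(s)>0$ for all $s<t$ and $X_{u,v}(t)\leq0$; by the first paragraph this is equivalent to $t\leq\tau_{u,v}$ together with $X_{u,v}(t\shortminus)\in[0,\Delta F_{u,v}(t)]=[0,\Theta(t;\Delta\mathbf{L},u,v)]$, where $X_{u,v}(t\shortminus)\geq0$ as a limit of positive values and the endpoint $X_{u,v}(t\shortminus)=0$ (continuous hitting) is automatically included. On the other hand, dominated convergence (the compact support of $\hat{\varpi}$ makes $u_l$ bounded) together with $\{s\geq\tau_{u,v}\}\uparrow\{\tau_{u,v}<t\}$ as $s\uparrow t$ yields $\mathcal{L}_l(t\shortminus)=\int u_l\,\mathbb{P}(\tau_{u,v}<t\mid B_0)\,d\hat{\varpi}(u,v)$, so $\Delta\mathcal{L}_l(t)=\int u_l\,\mathbb{P}(\tau_{u,v}=t\mid B_0)\,d\hat{\varpi}(u,v)$. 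Inserting the characterisation of $\{\tau_{u,v}=t\}$ gives $\Delta\mathcal{L}_l(t)=\Xi_l(t;\Delta\mathbf{L})$, and multiplying by $v_l$ and summing over $l$ produces $\Delta\mathbf{L}_v(t)=\sum_l v_l\Xi_l(t;\Delta\mathbf{L})=\Xi(t;\Delta\mathbf{L},v)$ for every $v\in S(\mathbf{v})$, which is \eqref{eq:fp_constraint}.

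The main obstacle I anticipate is the bookkeeping in the third step rather than any hard estimate: the threshold $\Theta(t;\Delta\mathbf{L},u,v)$ governing default at $t$ itself depends on $\Delta\mathbf{L}_v(t)$, the quantity being characterised, so \eqref{eq:fp_constraint} is genuinely a fixed-point relation and not an explicit formula; one must justify that the $\omega$-pointwise (given $B_0$) description of $\{\tau_{u,v}=t\}$ in terms of $X_{u,v}(t\shortminus)$ and the already-realised jump is legitimate and that $(u,v)\mapsto\mathbb{P}(X_{u,v}(t\shortminus)\in[0,\Theta(t;\Delta\mathbf{L},u,v)],\,t\leq\tau_{u,v}\mid B_0)$ is measurable and integrable before reading off the identity. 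The log-algebra and the dominated-convergence computation of $\mathcal{L}_l(t\shortminus)$ are routine.
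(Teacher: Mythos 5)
Your argument is correct and follows essentially the same route as the paper's proof: characterise the set defaulting exactly at time $t$ as $\{X_{u,v}(t\shortminus)\in[0,\Theta(t;\Delta\mathbf{L},u,v)],\,t\leq\tau_{u,v}\}$, where $\Theta(t;\Delta\mathbf{L},u,v)$ is exactly the jump magnitude $\Delta F_{u,v}(t)$, and then read off $\Delta\mathbf{L}_v(t)=\Xi(t;\Delta\mathbf{L},v)$ by integrating against $\hat{\varpi}$ conditionally on $B_0$. The only caveat is that the standing assumption is $u\cdot\hat{v}\geq0$ on the supports rather than componentwise non-negativity of $u_l,v_l$, but this changes nothing since only monotonicity of the aggregate $\mathbf{L}_v$ (hence of $F_{u,v}$) is needed for your first step.
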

\begin{proof}
Let $\mathbf{L}_v$ be as in the statement. Clearly, the c\`adl\`agness entails $t\mapsto  \mathbf{L}_v(t)$ is c\`adl\`ag, and the dynamics further imply that $X_{u,v}$ has a jump-discontinuity at time $t$ if and only if this is the case for $\mathbf{L}_v$. By our assumptions, $v \cdot \hat{u}\geq 0$ for all $\hat{u}\in S(\mathbf{u})$ and $v\in S(\mathbf{v})$, so we get
\begin{equation}\label{eq:DeltaL}
\Delta \mathbf{L}_v(t)= \int_{\mathbb{R}^k\times\mathbb{R}^k} v \cdot \hat{u}  \bigl( \mathbb{P}( t\leq \tau \mid B_0) - \lim_{s \uparrow t} \mathbb{P}( s\leq \tau_{\hat{u},\hat{v}} \mid B_0)\bigl)  d\hat{\varpi}(\hat{u},\hat{v})
\end{equation}
for all $v\in S(\mathbf{v})$. Furthermore, by the c\`adl\`agness, the dynamics \eqref{CMV} give that any jump satisfies $\Delta X_{\hat{u},\hat{v}}(t) = \Theta(t; \Delta \mathbf{L} , \hat{u},\hat{v}) $, where $\Theta$ is as defined in \eqref{the_map_Xi}, and hence we have
\[
\mathbb{P}( t=\tau_{\hat{u},\hat{v}} \mid B_0) = 	\mathbb{P}(X_{\hat{u},\hat{v}}(t\shortminus)\in [0, \Theta(t; \Delta \mathbf{L} , \hat{u},\hat{v})] ,\;t\leq \tau_{\hat{u},\hat{v}} \mid B_0),
\]
for all $\hat{u},\hat{v}\in S(\mathbf{u},\mathbf{v})$. Combining this with \eqref{eq:DeltaL}, we therefore arrive at
\[
\Delta \mathbf{L}_v(t) = \sum_{l=1}^k v_l \int_{\mathbb{R}^k\times\mathbb{R}^k} \hat{u} \mathbb{P}(X_{\hat{u},\hat{v}}(t\shortminus)\in [0, \Theta(t; \Delta \mathbf{L} , \hat{u},\hat{v})] ,\;t\leq \tau_{\hat{u},\hat{v}} \mid B_0)  d\hat{\varpi}(\hat{u},\hat{v})
\]
for all $v\in S(\mathbf{v})$. Noting that the right-hand side is precisely $ \Xi (t; \Delta\mathbf{L}, v) $ for $\Xi $ as defined in \eqref{the_map_Xi}, this proves the desired constraint \eqref{eq:fp_constraint}, and so we are done.
\end{proof}
The closest to \eqref{CMV} in the literature is the McKean--Vlasov problem studied in \cite{NS18}, whose `multitype' network structure is a special case of the framework introduced here.
We emphasise, however, that our treatment differs markedly from \cite{NS18}, by focusing on the financial underpinnings and connecting the mean-field problem to a finite particle system.  Moreover, we note that \cite{NS18} is focused on criteria for the dynamics to undergo a jump, while we obtain the above criterion for continuity. Finally, concerning jumps, our cascade condition sheds new light on \cite{NS18} in terms of how to characterise jumps in the mean-field problem, as we discuss below. Note, however, that \cite{NS18} also introduces a game version, where the strength of interactions between types are chosen strategically, something we do not consider here.

In the absence of \eqref{smallness_cond}, it is less clear if \eqref{CMV} is well-posed, as zero may no longer be the only solution to the jump size constraint \eqref{eq:fp_constraint} and there could be multiple non-zero solutions. Thus, we need to identify which jump size satisfying \eqref{eq:fp_constraint} is selected in the limit by the cascade condition \eqref{particle_cascade_cond1}. With $\Xi$ and $\Xi_l$ from \eqref{the_map_Xi}, we conjecture that \eqref{particle_cascade_cond1} selects solutions to \eqref{CMV} with jumps $\Delta\mathcal{L}_l=\Xi_l(t;\Delta\mathbf{L})$ given by the following \emph{mean-field cascade condition}
\begin{equation}\label{eq:limit_PJC}
\Delta \mathbf{L}_v(t) = \lim_{\varepsilon\downarrow 0} \lim_{m \uparrow \infty}  \Delta^{\!(m,\varepsilon)}_{t,v}, \quad \Delta^{\!(m,\varepsilon)}_{t,v}= \Xi(t; \varepsilon +  \Delta^{\!(m-1,\varepsilon)}_{t,\hspace{0.5pt}\cdot}, v), \;\; \Delta^{\!(0,\varepsilon)}_{t,v}= \Xi(t;\varepsilon , v).
\end{equation}
Unlike in \eqref{particle_cascade_cond1}, $\Xi(t;0,\cdot)$ is always zero, which explains the need for an $\varepsilon$ perturbation. Note also that \eqref{eq:limit_PJC} is indeed well-defined, since $\Delta^{(m,\varepsilon)}_{t,v}$ forms a bounded sequence increasing as $m\uparrow\infty$ and decreasing as $\varepsilon\downarrow0$; crucially, dominated convergence shows that $ \Delta \mathbf{L}$ given by \eqref{eq:limit_PJC} satisfies the jump size constraint \eqref{eq:fp_constraint}. The iterative structure of \eqref{eq:limit_PJC} is a particularly nice feature making it easy to implement numerically, as is completed in Supplemental~\ref{sec:4example} -- see Figure \ref{fig:heat_plots} -- to illustrate \eqref{CMV} and \eqref{eq:limit_PJC}. A further discussion of the mean-field cascade condition will form part of \cite{fein-soj}.

\section{Conclusion}\label{sec:conclusion}

Our main contribution is two-fold.  First, we proposed a novel dynamic default contagion model with early defaults driven by insolvency that generalises the Gai--Kapadia framework; a comparable illiquidity model is presented in the Online Supplemental. Second, starting from our finite bank setting, we proposed a tractable framework for reformulating the model as a stochastic particle system with a limiting mean-field problem for which we developed a constraint on the interactions that guarantees continuous evolution of the system.  

We wish to present two important extensions of our model, which we feel deserve further study: (i) In this work, we assumed no banks rebalance their asset holdings over time; in particular, we assume all incoming interbank payments are held in the risk-free asset until the final maturity time.  This optimal reinvestment and rebalancing question as an optimal control problem which encodes the tradeoffs between short-term liquidity and long-term equity -- particularly in the setting of joint illiquidity and insolvency defaults presented in Supplemental~\ref{sec:illiquid} -- is of great importance.  (ii) The Gai--Kapadia model which we consider presents a simplified balance sheet structure.  More granular balance sheet information so as to encode, e.g., margin calls and fire sales (as studied in, e.g.~\cite{bichuch2019optimization,cont2020liquidity}), could greatly alter the default times and cascades.  This problem is particularly tied with the liquidity of assets and thus, also, to the illiquidity problem presented in Supplemental~\ref{sec:illiquid}.

\bibliographystyle{plain}
\bibliography{bibtex2}

\newpage
\appendix
\section{Proof of Proposition \ref{prop:particle}}\label{proof:prop:particle}
In this section we give the proof of Proposition \ref{prop:particle}, showing that the stochastic particle system \eqref{particle_sys}-\eqref{particle_cascade_cond1} corresponds uniquely to the greatest clearing capital from Proposition \ref{prop:exist}.
\begin{proof}[Proof of Proposition \ref{prop:particle}]
We begin by verifying that any clearing capital  $K$ satisfying \eqref {eq:Kapital} gives a solution to \eqref{particle_sys} after applying the distance-to-default transformation \eqref{dist-to-def}, and then we show that the cascade condition \eqref{particle_cascade_cond1} for the resolution of instantaneous default cascades indeed corresponds to selecting the greatest clearing capital $K^\uparrow$ from Proposition \ref{prop:exist}. Notice that, given the dynamics \eqref{GBM} for the external assets, we can compute explicitly the conditional expectations $E[x_i(T) \; | \; \fcal_t]=x_i(t) e^{\int_t^T \!\mu_i(s) ds}$. Using also that $\lambda_{ij}=u^i\cdot v^j$ for $i\neq j$, by \eqref{lambda_fact}, we thus deduce that any solution $K$ to \eqref{eq:K_i} can be rewritten as a coupled system\vspace{-3pt}
\begin{equation}\label{1st_K_i}
\begin{cases}
K_i(t) =\displaystyle x_i(t) e^{\int_t^T \! \mu_i (s)ds  } - \psi(T,0)\Lambda_i  - \int_0^t (1-R)\psi(T,s)dL^n_i(s)  \\
L^n_i(t):= \displaystyle \sum_{j=1}^{n} \lambda_{ji} \mathbf{1}_{t\geq \tau_j} = \sum_{l=1}^k v^i_l  \sum_{j\neq i} u^j_l\mathbf{1}_{t\geq \tau_j}, \;\;\;\tau_i=\inf\{ t\geq0 \; | \; K_i(t) \leq 0  \},
\end{cases}
\end{equation}
for $i=1,\ldots,n$, where we note that the definitions of $\mathbf{L}_{v}^n$ and $\mathcal{L}_{l,i}^n$ in \eqref{particle_sys}-\eqref{particle_cascade_cond1} imply
\begin{equation}\label{eq:fat-L}
\mathbf{L}_{v^i}^n(t)=\sum_{l=1}^kv_l(\mathcal{L}^n_{l,i}(t)+u_l^i\mathbf{1}_{t\geq \tau^i}) = L_i^n(t) + u^i \cdot v^i \mathbf{1}_{t \geq \tau_i}.
\end{equation}
Applying the transformation \eqref{dist-to-def} to the expression for $K_i$ in \eqref{1st_K_i}, we obtain the dynamics \eqref{particle_sys} for $X_i$ as defined in \eqref{dist-to-def}, where the equality of the stopping times follows from the fact that $K_i(t)\leq 0$ if and only if $X_i(t)\leq 0$, for each $i\in \mathcal{N}$. Of course, different solutions $K$ correspond to different stopping times.

From here onwards, we let $X_i$ (and hence the stopping times $\tau_i$) be defined from $K^\uparrow$, and we then need to check that $X_i$ indeed satisfies \eqref{particle_sys}-\eqref{particle_cascade_cond1}, where we stress that $K^\uparrow$ is c\`adl\`ag, by the second part of Proposition \ref{prop:exist}. As there are $n$ banks, there can be at most $n$ defaults, and hence the system can undergo at most $n$ jumps. At time $t=0$, the initial condition is such that all banks have strictly positive capital, so we can let $\varsigma_1>0$ denote the first (random) time that $K^\uparrow_{i}(t\shortminus)\leq0$ for some ${i}\in\mathcal{N}$. For $t\in [0,\varsigma_1)$ we have $\mathcal{D}_t=\emptyset$ and $\Delta \mathcal{L}^{n}_{l,i}(t)=0$ for all $l$ and $i$, since $K^\uparrow_i>0$ on $[0,\varsigma_1)$ for all $i$ gives $\Delta_{t,\cdot}^{n,(0)}=0$ and hence $\Delta \mathbf{L}^n(t)=0$. Thus, each $X_i$ satisfies \eqref{particle_sys}-\eqref{particle_cascade_cond1} on $[0,\varsigma_1)$. Now let
\begin{align*}
\mathcal{D}^{(0)}_{\varsigma_1} := \{i \; | \;  X_i(\varsigma_1 \shortminus) \leq \Theta^n(\varsigma_1;0;i) \}= \{i \; | \;  X_i(\varsigma_1 \shortminus) \leq 0 \}  = \{i \; | \; K^\uparrow_i(\varsigma_1\shortminus)=0\}\neq \emptyset,
\end{align*}
and note that
\begin{equation}\label{eq:1st_jump}
\sum_{l=1}^k v_{l}^i \sum_{j\in\mathcal{D}_{\varsigma_1}^{(0)} }u_{l }^j = \Xi^n (\varsigma_1; 0, v_i ) = \Delta^{n,(0)}_{\varsigma_1,v^i} , \qquad \text{for} \quad i=1,\ldots,n,\vspace{-4pt}
\end{equation}
Since $	\mathcal{D}^{(0)}_{\varsigma_1} \subseteq \{i \; | \; \tau_i = \varsigma_1\}$, it follows that the loss processes $\mathbf{L}_i^n$ in \eqref{1st_K_i} must jump up by at least the amount \eqref{eq:1st_jump} at time $\varsigma_1$, and hence, even under the greatest clearing capital, any banks $i$ with $X_i(\varsigma_1 \shortminus) \leq \Theta^n(\varsigma_1;\Delta^{n,(0)}_{\varsigma_1,\cdot};i) $ must necessarily be shifted to non-positive capital $K^\uparrow_i(\varsigma_1)\leq0$ precisely at time $\varsigma_1$. That is,
\[
\mathcal{D}^{(1)}_{\varsigma_1}:= \{ i \; | \;  X_i(\varsigma_1 \shortminus) \leq \Theta^n(\varsigma_1;\Delta^{n,(0)}_{\varsigma_1,\cdot};i)   \} \subseteq	\{i \; | \; \tau_i=\varsigma_1\}.
\]
From here, we can then observe that
\[
\sum_{l=1}^k v_{l}^i \sum_{j\in\mathcal{D}_{\varsigma_1}^{(1)} }u_{l }^j = \Xi^n (\varsigma_1; \Delta^{n,(0)}_{\varsigma_1,\cdot}, v_i ) = \Delta^{n,(1)}_{\varsigma_1,v^i} , \qquad \text{for} \quad i=1,\ldots,n,\vspace{-4pt}
\]
so, arguing precisely as above, it follows from $\mathcal{D}^{(1)}_{\varsigma_1} \subseteq \{i \; | \; \tau_i=\varsigma_1\}$ that we must have
\[
\mathcal{D}^{(2)}_{\varsigma_1}:= \{ i \; | \;  X_i(\varsigma_1 \shortminus) \leq \Theta^n(\varsigma_1;\Delta^{n,(1)}_{\varsigma_1,\cdot};i)   \} \subseteq 	\{i \; | \; \tau_i=\varsigma_1\}.
\]
Continuing in this way, we further obtain that, for any $m\in\{0,\ldots,n\}$,
\begin{equation}\label{eq:default_set_iterated}
\mathcal{D}^{(m)}_{\varsigma_1}:=  \{ i \; | \;  X_i(\varsigma_1 \shortminus) \leq \Theta^n(\varsigma_1;\Delta^{n,(m-1)}_{\varsigma_1,\cdot};i)   \} \subseteq 	\{i \; | \; \tau_i=\varsigma_1\}.
\end{equation}
Since there can be at most $n$ defaults, there is an $\bar{m}\leq n-1$ such that  $\Delta^{n,(m)}_{\varsigma_1,\cdot}=\Delta^{n,(\bar{m})}_{\varsigma_1,\cdot}$ and $\mathcal{D}^{(m)}_{\varsigma_1}=\mathcal{D}^{(\bar{m})}_{\varsigma_1}$ for $m=\bar{m},\ldots,n$. From this, we deduce that
\[
\mathcal{D}^{(\bar{m})}_{\varsigma_1}=\{ i\; | \; K_i(\varsigma_1\shortminus)\leq (1-R)\psi(T,\varsigma_1) \Delta^{n,(\bar{m})}_{\varsigma_1,v^i} \},
\]
so, taking $\bar{\Delta}_i^n:=\Delta^{n,(\bar{m})}_{\varsigma_1,v^i} - u^i\cdot v^i \mathbf{1}_{i\in \mathcal{D}^{(\bar{m})}_{\varsigma_1}}$ for each $i=1,\ldots,n$, we get
\[
\bar{\Delta}_i^n = \sum_{l=1}^k v_{l}^i \sum_{j\neq i}^nu_{l }^j\mathbf{1}_{\{K_j(\varsigma_1\shortminus)\leq (1-R)\psi(T,\varsigma_1) \bar{\Delta}_j^n \}}.
\]
This means that letting each $L_i^n(t)$ have a jump of size $\bar{\Delta}_i^n$ at time $\varsigma_1$ with corresponding default set $\mathcal{D}^{(\bar{m})}_{\varsigma_1}$ yields a solution to \eqref{1st_K_i} on $[0,\varsigma_1]$. If $\mathcal{D}^{(\bar{m})}_{\varsigma_1}$ is a strict subset of $\{i \; | \; \tau_i=\varsigma_1\}$, then we have found a solution to \eqref{1st_K_i} with greater clearing capital than $K^\uparrow$ at time $\varsigma_1$, which is of course a contradiction, so we conclude from this and \eqref{eq:default_set_iterated} that
\[
\mathcal{D}_{\varsigma_1}:=\{i \; | \; \tau_i=\varsigma_1\}=\mathcal{D}^{(\bar{m})}_{\varsigma_1}=\{ i \; | \;  X_i(\varsigma_1 \shortminus) \leq \Theta^n(\varsigma_1;\Delta^{n,(\bar{m})}_{\varsigma_1,\cdot};i)   \}.
\]
with jump sizes
\[\Delta L^n_i(t) = \Delta^{n,(\bar{m})}_{\varsigma_1,v^i} - u^i\cdot v^i \mathbf{1}_{i\in \mathcal{D}_{\varsigma_1}}.
\]
Recalling \eqref{eq:fat-L}, this indeed implies
\[
\Delta \loss^n_{v^i}(t) =\Delta^{n,(\bar{m})}_{\varsigma_1,v^i} = \lim_{m\rightarrow n} \Delta^{n,(m)}_{\varsigma_1,v^i},
\]
and hence also $\Delta \mathcal{L}^{n}_{l,i}(t) = \Xi^n_l(t,\Delta \mathbf{L}^n) - u_l^i\mathbf{1}_{i\in \mathcal{D}_t}$, by definition of $\Xi_{l}^n$. In conclusion, $X_i$ from \eqref{dist-to-def} defined in terms of $K^\uparrow$ is the unique solution to \eqref{particle_sys}-\eqref{particle_cascade_cond1} on $[0,\varsigma]$.

It remains to consider the $k^{th}$ (random) time $\varsigma_k>\varsigma_{k-1}$ that we have $K^\uparrow(\varsigma_k\shortminus)\leq0$, for a general $k=2,\ldots,n$. For such a $k$, we simply define
\[
\mathcal{D}^{(m)}_{\varsigma_k}:=  \{ i \; | \;  X_i(\varsigma_k \shortminus) \leq \Theta^n(\varsigma_k;\Delta^{n,(m-1)}_{\varsigma_1,\cdot};i),\; \varsigma_k\leq \tau_i   \}
\]
in place of \eqref{eq:default_set_iterated}, noting that we have included the requirement $\varsigma_k\leq \tau_i $, so that we are not counting banks that have already defaulted strictly before time $\varsigma_k$. From here, we can argue precisely as above to conclude that $X^i$ defined from $K^\uparrow$ is the unique solution to \eqref{particle_sys}-\eqref{particle_cascade_cond1} on $[0,\varsigma_k]$, and hence the proof is complete, since the system has standard diffusive dynamics for the remaining time after the last default.
\end{proof}

\section{Multi-type mean-field network example}\label{sec:4example}
To illustrate the tiered grouping structure presented in Section~\ref{sec:mean-field}, setting $k=4$ gives a simple way of splitting the system into four subgroups, where, e.g.,~$u^i=(0,u^i_2,0,0)$ says that bank $i$ belongs to the second grouping of top-tier banks with the score $u^i_2$ giving its relative importance, while $u^j=(0,0,u^j_3,0)$ says that bank $j$ belongs to the first grouping of lower-tier banks with relative importance $u^j_3$, and so on. The tiered structure is then obtained by letting $v^1,v^2$ spread the top-tier lending with large values in the first two entries (lending between the top-tier) and lower values in the last two entries (less lending to the lower-tier), while $v^3,v^4$ has small values in the first two entries (some lending to the top-tier) and zeroes in the last two entries (no lending between the lower-tier). See Figure \ref{network}.
\begin{figure}[H]
	\centering
	\begin{minipage}[b]{0.8\linewidth}
		\includegraphics[width=\textwidth]{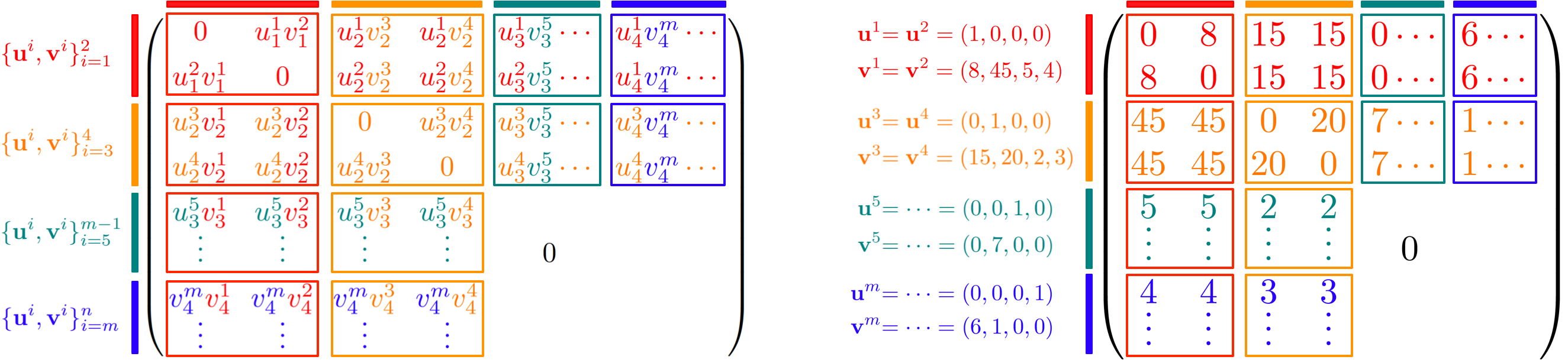}
	\end{minipage}
	\caption{The first picture shows the general structure of the relative liabilities $(\lambda_{ij})$ for the simple $k=4$ example described above. The second picture gives a concrete realisation of this with \emph{no} heterogeneity within groups.}\label{network}
\end{figure}
Figure \ref{limit_meas} illustrates the limiting mean-field model of this example. 
\begin{figure}[H]\vspace{-3pt}
\centering
\begin{minipage}[b]{0.75\linewidth}
	\includegraphics[width=\textwidth]{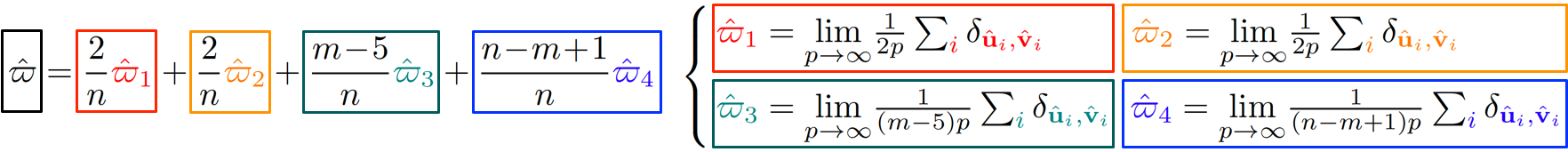}
\end{minipage}\vspace{-2pt}
\caption{Limiting $\hat{\varpi}$ for the particular four-group network structure in Figure \ref{network}. In the simplest case of homogeneity within groups, the $\hat{\varpi}_i$'s are simply given by the point masses $\delta_{{\color{red}u^1,v^1}}$, $\delta_{{\color{orange}u^3,v^3}}$, $\delta_{{\color{teal}u^5,v^5}}$, and $\delta_{{\color{blue}u^m,v^m}}$.}\label{limit_meas}.
\end{figure}\vspace{-10pt}

With this concrete example in mind, consider the `four-homogeneous-types' network structure given by the rightmost liabilities matrix in Figure \ref{network}, where we take $n=12$ with $2$ banks of each of the two top-tier types (red and orange), and $4$ of each of the two lower-tier types (green and blue). The corresponding mean-field network, described by Figure \ref{limit_meas}, is then modelled by 
\begin{equation}\label{eq:example_cont}
(\mathbf{u},\mathbf{v} ) \sim \hat{\varpi}= \frac{1}{6}\delta_{({\color{red}e_1}, {\color{red}(8,45,5,4)})} + \frac{1}{6} \delta_{({\color{orange}e_2}, {\color{orange}(15,20,2,3)})} +\frac{1}{3} \delta_{({\color{teal}e_3}, {\color{teal}7e_2})}  + \frac{1}{3} \delta_{({\color{blue}e_4}, {\color{blue}(6,1,0,0)})},
\end{equation}
and we have four initial conditions $V^i_0(\cdot)=V_0(\cdot|u^i,v^i)$ for $i=1,\ldots,4$. For this simple example, the condition \eqref{smallness_cond} just reads as four inequalities
\begin{equation}\label{eq:continuity_4_types}
\Vert V^1_0 \Vert_\infty < \frac{\Lambda_1}{1-R} \frac{1}{15}, \quad 	\Vert V^2_0 \Vert_\infty < \frac{\Lambda_2}{1-R} \frac{1}{45}, \quad \Vert V^3_0 \Vert_\infty < \frac{\Lambda_3}{1-R} \frac{1}{5},\quad\Vert V^4_0 \Vert_\infty < \frac{\Lambda_4}{1-R} \frac{1}{4} .
\end{equation}
Notice from the expression for $X_{u,v}(0)$ that if $v_0^i$ denotes the density of the type $i$ initial external asset values, then we have $V_{0}^i(x)=v^i_0(\Lambda_{i} \psi(T,0)e^{-\!\int_0^T\!\!\mu_{i}(s)ds}e^x)\Lambda_{i} \psi(T,0)e^{-\!\int_0^T\!\!\mu_{i}(s)ds}e^x$.

\begin{figure}[H]
	\centering
	\begin{minipage}[b]{0.23\linewidth}
		\includegraphics[width=\textwidth]{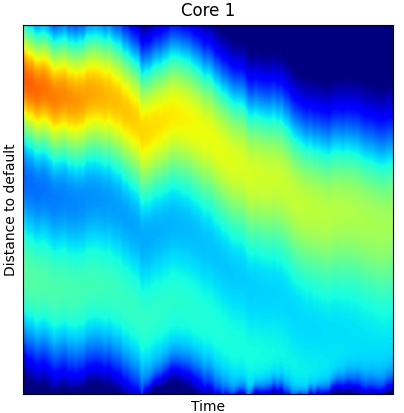}
\end{minipage}
\begin{minipage}[b]{0.23\linewidth}
\includegraphics[width=\textwidth]{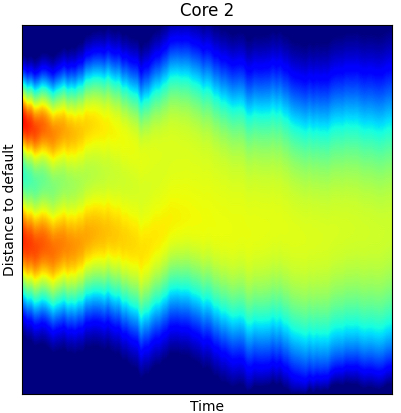}
\end{minipage}
\begin{minipage}[b]{0.23\linewidth}
\includegraphics[width=\textwidth]{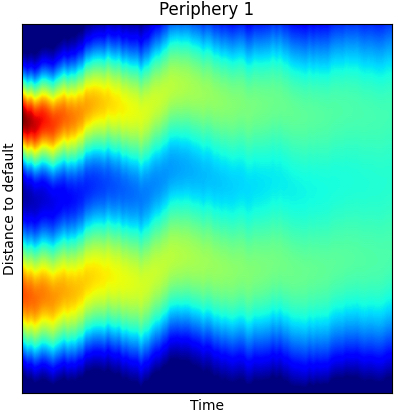}
\end{minipage}
\begin{minipage}[b]{0.23\linewidth}
\includegraphics[width=\textwidth]{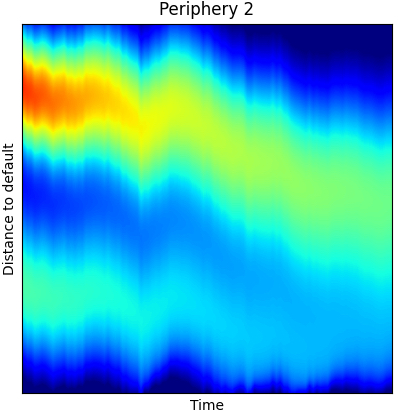}
	\end{minipage}
	\caption{Density heat plots of \eqref{CMV}, for a given realisation of $B_0$, with the network structure \eqref{eq:example_cont}. The bimodal initial densities are sufficiently spread out to satisfy \ref{eq:continuity_4_types}, ensuring continuity. Banks in `Core 1' and `Periphery 2' are exposed to each other, which creates a sharp decline due to insolvency contagion from their lower performing fractions. During the decline, $B_0$ swings up, thus stopping the agony temporarily, but a downward trend returns, and this is amplified by continual contagion in the lower performing fractions. `Core 2' feels some effect of this, while `Periphery 1' is largely untouched. Compare with the much more drastic turn of events in Figure \ref{fig:heat_plots} below.}
	\label{fig:heat_plots2}
\end{figure}

\begin{figure}[H]
\centering
\begin{minipage}[b]{0.23\linewidth}
	\includegraphics[width=\textwidth]{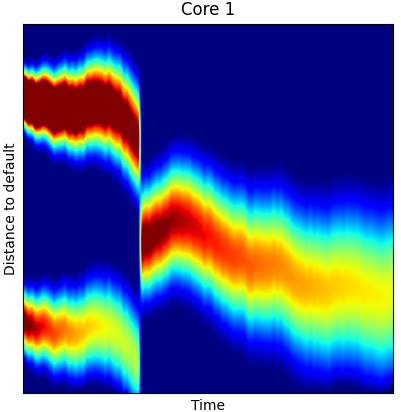}
\end{minipage}
\begin{minipage}[b]{0.23\linewidth}
	\includegraphics[width=\textwidth]{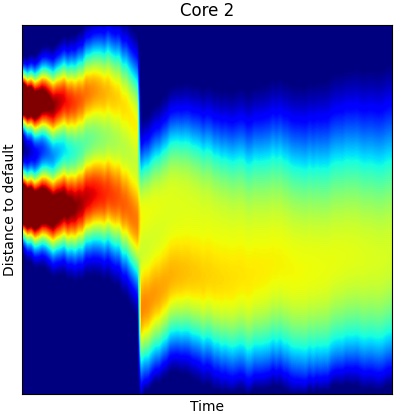}
\end{minipage}
\begin{minipage}[b]{0.23\linewidth}
	\includegraphics[width=\textwidth]{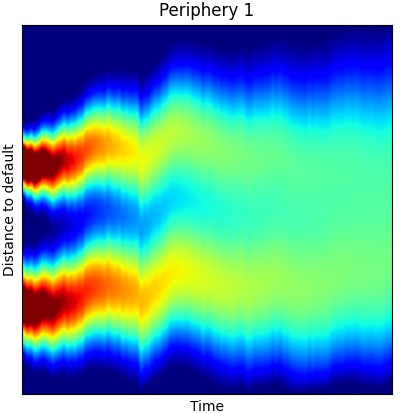}
\end{minipage}
\begin{minipage}[b]{0.23\linewidth}
	\includegraphics[width=\textwidth]{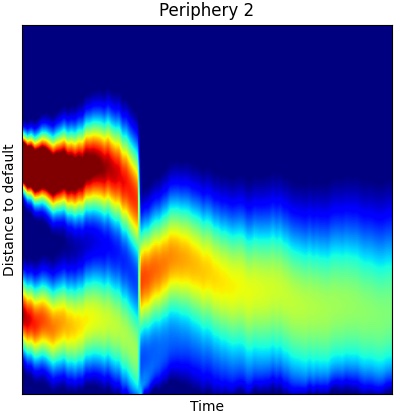}
\end{minipage}
\caption{Density heat plots of \eqref{CMV} for the same setting as Figure \ref{fig:heat_plots2}, only with different initial conditions that are now highly concentrated at each mode, violating \eqref{smallness_cond}. The plots show a jump, which occurs as an instant default cascade between the low performing fractions of `Core 1' and `Periphery 2'. This spills into a severe downgrading of the health of `Core 2', but only a negligible proportion of it defaults, as `Core 2' was otherwise performing well. Since `Periphery 1' is only exposed to defaults in `Core 2', the events have little impact on `Periphery 1'.}
\label{fig:heat_plots}
\end{figure}

\section{Illiquidity and insolvency}\label{sec:illiquid}
Within the main part of the text, defaults occur due to insolvency.  In this section, we consider defaults due to illiquidity.  First, we will focus on the dynamics of illiquid defaults \emph{only} and then joint condition on liquidity and capital.  For simplicity of exposition and description we will focus solely on the finite system with parameters as described in Section~\ref{sec:dynamic}.  Herein, the interpretation of the external asset $x(t)$ at intermediate time $t$ is important; we will take $x(t)$ to denote the amount of (external) cash available to each bank at time $t$ which can be used to pay debts.  As noted in Section~\ref{sec:dynamic}, the cash available $x(t)$ may differ from the expected future assets $\bbe[x_i(T) \; | \; \fcal_t]$.

\paragraph{A.1 Illiquidity only}
\begin{figure}[t]
\centering
\begin{subfigure}[t]{0.42\textwidth}
\centering
\begin{tikzpicture}[x=\linewidth/4.9,y=5mm]
\draw[draw=none] (0,9.5) rectangle (5,10) node[pos=.5,yshift=0.3em]{};
\draw[draw=none] (0,9) rectangle (3,9.5) node[pos=.5,yshift=0.2em]{\small \bf Assets};
\draw[draw=none] (3,9) rectangle (5,9.5) node[pos=.5,yshift=0.2em]{\small \bf Liabilities};

\filldraw[fill=blue!20!white,draw=black] (0,6.5) rectangle (3,9) node[pos=.5,style={align=center}]{\footnotesize External \\ ${\scriptstyle x_i(t)}$};
\filldraw[fill=yellow!20!white,draw=black] (0,4) rectangle (3,6.5) node[pos=.5,style={align=center}]{\footnotesize Interbank (Solvent) \\ ${\scriptstyle\sum_{j \in \acal_t} L_{ji}(t)}$};
\filldraw[fill=orange!20!white,draw=black] (0,0) rectangle (3,4) node[pos=.5,style={align=center}]{\footnotesize Interbank (Insolvent) \\ ${\scriptstyle\sum_{j \in \ncal \backslash \acal_t} \Bigl(\begin{array}{l}{\scriptstyle (1 - R) L_{ji}(\tau_j)}\\ {\scriptstyle \;+ \,R L_{ji}(T)}\end{array}\Bigr)}$};
			
\filldraw[fill=red!20!white,draw=black] (3,3) rectangle (5,9) node[pos=.5,style={align=center}]{\footnotesize Total \\ ${\scriptstyle\sum_{j \in \ncal_0} L_{ij}(t)}$};
\filldraw[fill=green!20!white,draw=black] (3,0) rectangle (5,3) node[pos=.5,style={align=center}] (t) {\footnotesize Cash Account \\ ${\scriptstyle V_i(t)}$};
\end{tikzpicture}
\caption{\footnotesize Stylised cash flow statement for bank $i$ at time $t$.}
\label{fig:cash-flow}
\end{subfigure}
~
\begin{subfigure}[t]{0.47\textwidth}
\centering
\includegraphics[width=0.73\linewidth]{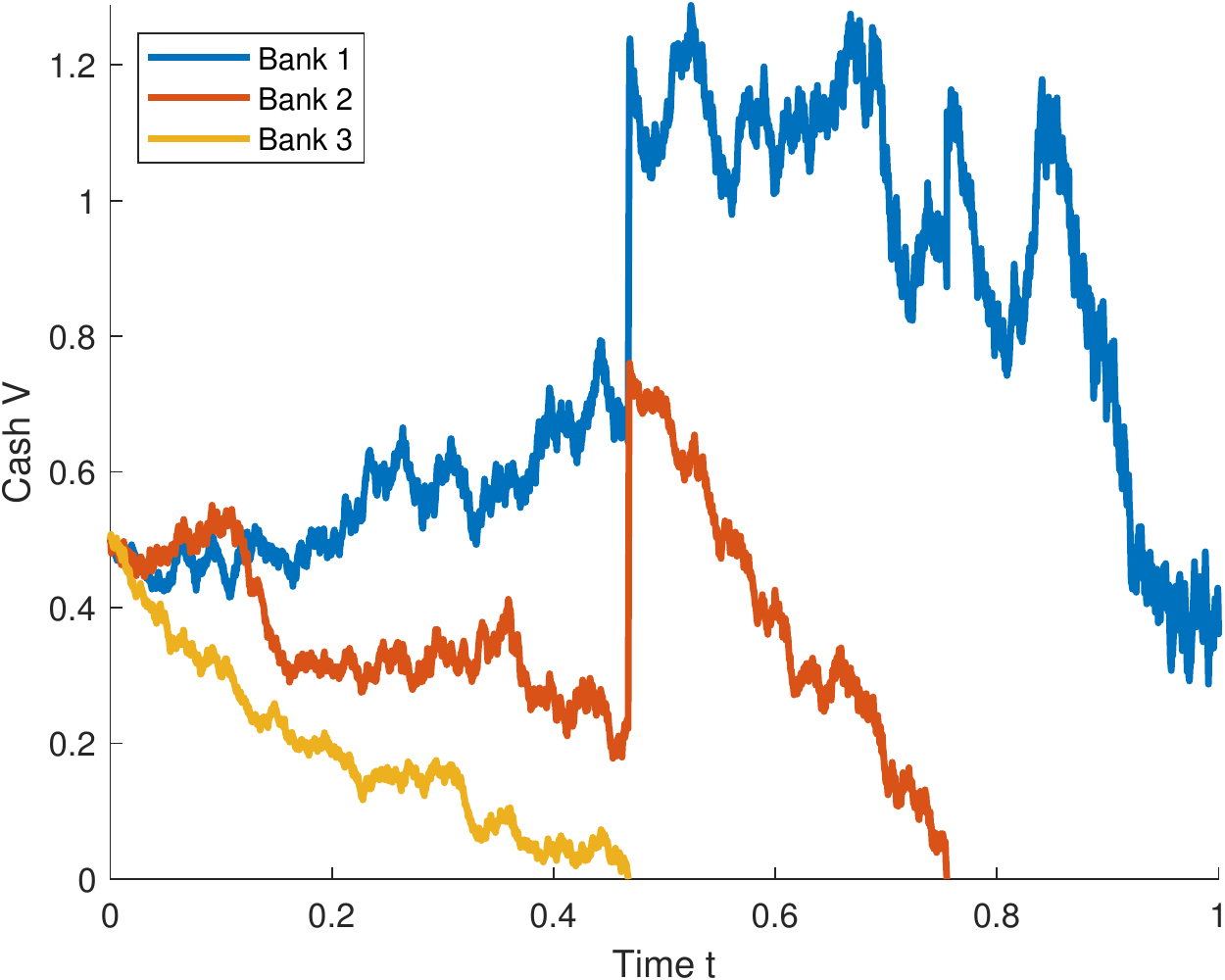}
\caption{\footnotesize Realisation of a 3 bank system driven by correlated GBMs.}
\label{fig:cash-3bank}\vspace{-4pt}
\end{subfigure}
\caption{Stylised cash flow statement at time $t$ and a realisation of this system over time.}\vspace{-8pt}
\end{figure}
Akin the stylised balance sheet displayed in Figure~\ref{fig:balance-sheet}, the cash flow statement for bank $i$ is provided in Figure~\ref{fig:cash-flow}.  As opposed to the main text, herein we wish to consider defaults caused by illiquidity, i.e., when the cash account drops below 0.  In order to consider the cash account, we need to determine the impacts of a default on the cash account of the defaulting firm's counterparties.  Taking the notion from, e.g.,~\cite{KV16}, the \emph{future} obligations become due at the default time; as in the main text, we assume that these obligations have a fixed recovery rate $R$ which is paid to the defaulting firm's counterparties.  Given these simplifying financial assumptions, the cash account evolves according to:
\begin{align}
\label{eq:liquidity} V_i(t) &= x_i(t) + \sum_{j \in \acal_t} L_{ji}(t) + \sum_{j \not\in \acal_t}\left[(1-R)L_{ji}(\tau_j) + R L_{ji}(T)\right] - \sum_{j \in \ncal_0} L_{ij}(t)\\
\label{eq:illiquid-time} \tau_i &= \inf\{t \in [0,T] \; | \; V_i(t) \leq 0\}\\
\label{eq:liquid-set} \acal_t &= \{i \in \ncal \; | \; \tau_i > t\}
\end{align}
for every bank $i$ and time $t$.

Notably, when a default occurs ($V_i(t) \leq 0$) then the liquidity of all solvent firms actually jumps \emph{up} because the future unpaid liabilities are paid off early (at the default time) as is studied in, e.g., \cite{KV16}, because $(1-R) L_{ji}(t) + R L_{ji}(T) \geq L_{ji}(t)$ for any time $t$. However, with this jump in the short term cash account, the long-term trend for the cash account will experience an increased downward drift (or a lesser upward drift) after the default time since interbank assets $L_{ji}$ are now cut off at the default time.  This implies that this illiquidity setting in continuous time can\emph{not} cause any contagious defaults.  This differs from the discrete time setting of~\cite{KV16} due to continuity assumptions of this model.  This insight leads to the following proposition on the uniqueness of the clearing solution to this liquidity problem.
\begin{proposition}[Clearing cash account]\label{prop:exist-cash}
(Lebesgue) almost surely, there exists a unique clearing cash account $V^*$ to the network clearing problem defined by~\eqref{eq:liquidity}. Furthermore, this clearing cash account is c\`adl\`ag.
\end{proposition}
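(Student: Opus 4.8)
The plan is to construct $V^*$ by a forward fictitious default algorithm, as in the discussion after Proposition~\ref{prop:exist}, and then to exploit the fact that here defaults only ever \emph{help} solvent banks in the short run in order to show that this construction involves no choices, so that any clearing cash account must coincide with it.

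First I would build the candidate solution. Starting from $\acal_0=\ncal$, run \eqref{eq:liquidity} with all banks solvent; since each $x_i$ and each $L_{ij}$ is continuous, the resulting $V_i$ are continuous up to the first random time $\varsigma_1$ at which $\min_i V_i$ touches $0$, and one sets $\dcal_{\varsigma_1}:=\{i\in\acal_0 : V_i(\varsigma_1\shortminus)\le0\}$. The key computation is that for every $i\notin\dcal_{\varsigma_1}$,
\[
V_i(\varsigma_1)-V_i(\varsigma_1\shortminus)=\sum_{j\in\dcal_{\varsigma_1}}\bigl[(1-R)L_{ji}(\varsigma_1)+RL_{ji}(T)-L_{ji}(\varsigma_1)\bigr]=\sum_{j\in\dcal_{\varsigma_1}}R\bigl(L_{ji}(T)-L_{ji}(\varsigma_1)\bigr)\ge0,
\]
so the cash account of every non-defaulting bank jumps up (or is unchanged) and stays strictly positive; hence no cascade is triggered and $\dcal_{\varsigma_1}$ is exactly the set of banks with non-positive cash at $\varsigma_1$. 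Repeating on $[\varsigma_1,\varsigma_2)$ with the reduced solvent set $\acal_0\setminus\dcal_{\varsigma_1}$, and so on, the procedure terminates after at most $n$ jumps and yields a process $V^*$ that is continuous on each inter-default interval and right-continuous with left limits at each $\varsigma_k$, hence c\`adl\`ag; setting $\tau_i:=\varsigma_k$ for $i\in\dcal_{\varsigma_k}$ one checks directly that $(V^*,\tau,\acal)$ solves \eqref{eq:liquidity}--\eqref{eq:liquid-set}.

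For uniqueness I would take two clearing cash accounts $V,\tilde V$ and set $\theta:=\inf\{t\in[0,T] : V(t)\ne\tilde V(t)\}$, aiming to show $\theta>T$. On $[0,\theta)$ the two solutions agree, so they have the same defaulted banks and default times there, the same set $\acal_{\theta\shortminus}$ of banks solvent up to $\theta$, and the same left limit $V(\theta\shortminus)=\tilde V(\theta\shortminus)$, which is determined by running \eqref{eq:liquidity} with this fixed solvent set and is independent of the solution. Consequently the banks forced to default at exactly $\theta$, namely those $i\in\acal_{\theta\shortminus}$ with $V_i(\theta\shortminus)\le0$, are the same for both solutions; and by the monotone-up computation above the remaining solvent banks have strictly positive cash at $\theta$ in both solutions, so no solution can add any further default at $\theta$. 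Thus $V(\theta)=\tilde V(\theta)$, and running \eqref{eq:liquidity} forward from $\theta$ with the common solvent set shows $V=\tilde V$ on an interval strictly beyond $\theta$, contradicting the definition of $\theta$ unless $\theta>T$. Equivalently, one can phrase this as an induction over the (at most $n$) ordered default times.

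The step where I expect to spend real care — and the reason for the almost-sure qualifier — is the treatment of the null event on which two distinct banks' continuous `free' cash-account processes (for one of the finitely many possible solvent sets) touch zero at the same instant, or on which such a process is tangent to $0$ at a default time of another bank. Off this event the infimum defining $\tau_i$ is attained with $V_i(\tau_i\shortminus)=V_i(\tau_i)=0$, defaults are simple, and the bookkeeping above is unambiguous; on it one could in principle have two clearing solutions disagreeing on whether a bank that is momentarily bailed back up to $0$ by a simultaneous default counts as defaulted. For the nondegenerate geometric Brownian dynamics of Section~\ref{sec:dynamic} this event is $\bbp$-null by a standard argument (the relevant hitting times of the finitely many semimartingales are a.s.\ pairwise distinct), and otherwise the argument is the same continuity bookkeeping used in the proof of Proposition~\ref{prop:exist}(ii).
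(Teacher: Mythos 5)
Your proposal is correct and follows essentially the same route as the paper: a forward constructive iteration in which the key observation is that a default produces a nonnegative jump $R\bigl(L_{ji}(T)-L_{ji}(\tau_j)\bigr)$ in every surviving bank's cash account, so no contagion is possible, every clearing solution is forced to coincide with the constructed one between successive default times, and the process terminates after at most $n$ steps. Your explicit "first disagreement time" phrasing of uniqueness and your isolation of the null event of simultaneous/tangential hits are just slightly more careful renderings of what the paper asserts with its "almost surely" qualifier.
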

\begin{proof}
We will prove this result constructively.  Define
    \[V_i^{(0)}(t) := x_i(t) + \sum_{j \in \ncal} L_{ji}(t) - \sum_{j \in \ncal_0} L_{ij}(t)\]
for every bank $i$ and time $t$.
Up until the first default time $\tau_{[1]} = \inf\{t \in [0,T] \; | \; \min_{i \in \ncal} V_i^{(0)}(t) \leq 0\}$, all possible clearing solutions $V$ must coincide with $V^{(0)}$.  Let $\iota_1$ denote this first defaulting bank and there is no other bank $i \neq \iota_1(\omega)$ such that $V_i^{(0)}(\tau_{[1]},\omega) \leq 0$ for almost any $\omega \in \Omega$.   Define $\acal_t^{(0)} = \ncal$ for $t < \tau_{[1]}$ and $\acal_t^{(0)} = \ncal\backslash\{\iota_1\}$ for $t \geq \tau_{[1]}$.
Now, define
    \[V_i^{(1)}(t) := x_i(t) + \sum_{j \in \acal_t^{(0)}} L_{ji}(t) + \sum_{j \not\in \acal_t^{(0)}}\left[(1-R)L_{ji}(\tau_j) + R L_{ji}(T)\right] - \sum_{j \in \ncal_0} L_{ij}(t)\]
for every bank $i$ and time $t$.  Up until the \emph{second} default time $\tau_{[2]} = \inf\{t \in [0,T] \; | \; \min_{i > 1} V_i^{(1)}(t) \leq 0\}$, all possible clearing solutions must coincide with $V^{(1)}$ due to the upward jump in the cash accounts of solvent firms at $\tau_{[1]}$ precluding any contagious defaults at that time.  In this way, we can iteratively construct $V^{(k+1)}$ from $V^{(k)}$, i.e.,
    \[V_i^{(k+1)}(t) := x_i(t) + \sum_{j \in \acal_t^{(k)}} L_{ji}(t) + \sum_{j \not\in \acal_t^{(k)}}\left[(1-R)L_{ji}(\tau_j) + R L_{ji}(T)\right] - \sum_{j \in \ncal_0} L_{ij}(t)\]
for every bank $i$ and time $t$.  This process is repeated at each new defaulting time until (almost surely) no new defaults occur (either due to liquidity for the full period $[0,T]$ or due to early default); as with the fictitious default algorithm, such a process would require at most $n$ iterations.  Since $V^{(k)}(t)$ is c\`adl\`ag and must coincide with every clearing solution on $t \in [0,\tau_{[k]}]$, the result is proven.
\end{proof}

\begin{figure}[t]
\centering
\begin{subfigure}[t]{0.45\textwidth}
\centering
\includegraphics[width=0.73\linewidth]{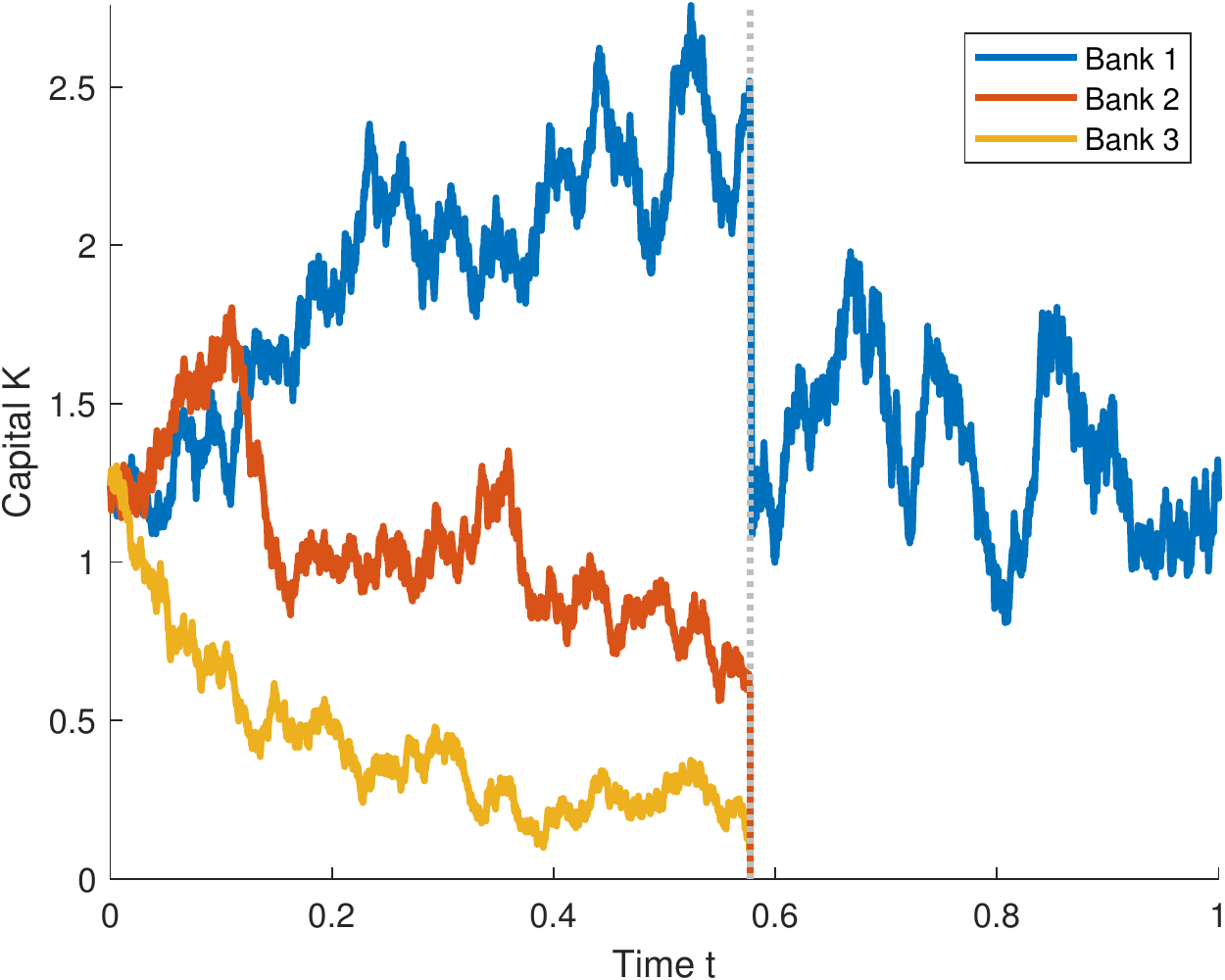}
\caption{\footnotesize Realisation of the capital account $K$.}
\label{fig:capital-joint}
\end{subfigure}
~
\begin{subfigure}[t]{0.45\textwidth}
\centering
\includegraphics[width=0.73\linewidth]{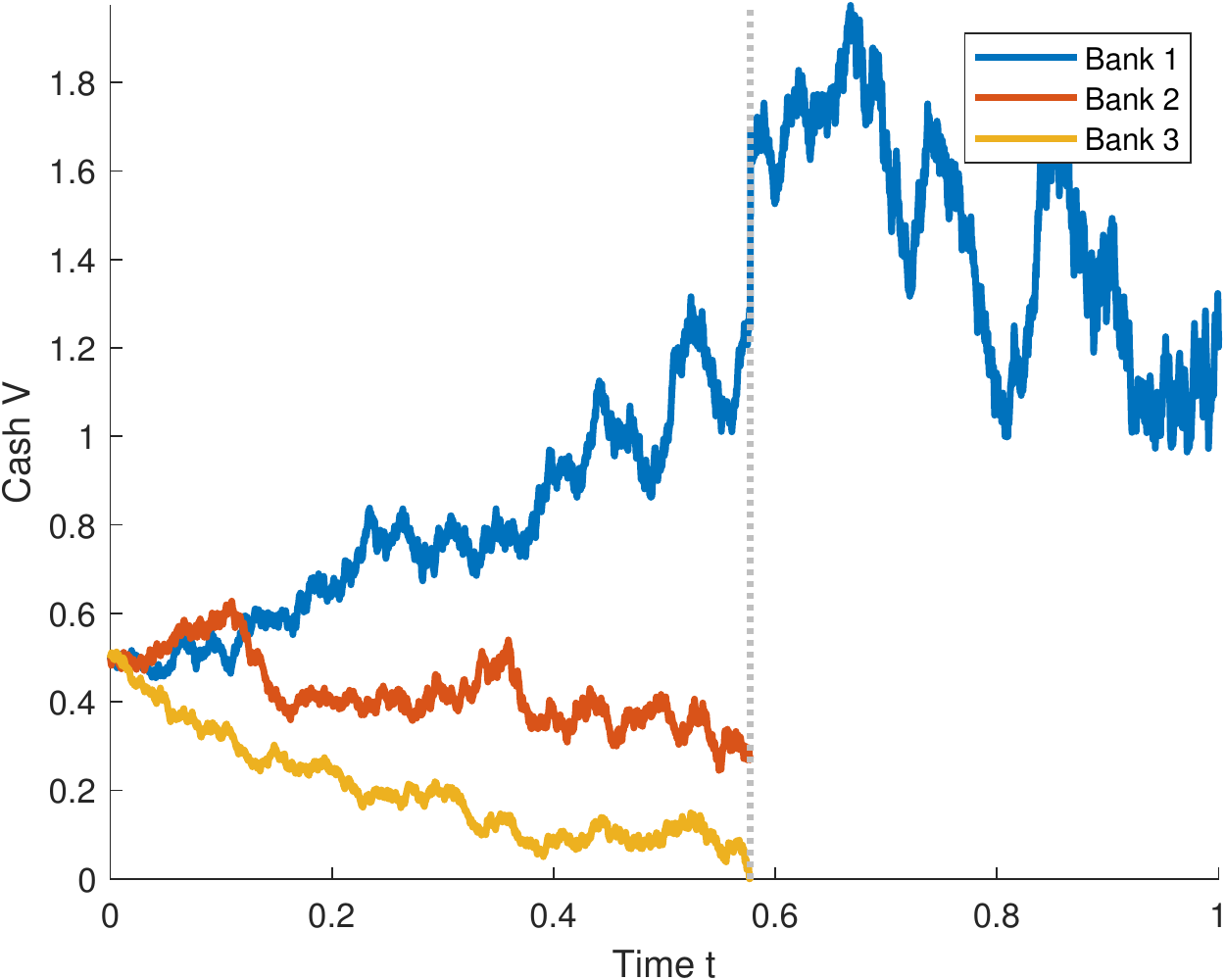}
\caption{\footnotesize Realisation of the cash account $V$.}
\label{fig:cash-joint}\vspace{-4pt}
\end{subfigure}
\caption{Realisation of joint insolvency and illiquidity system over time for a 3 bank system driven by correlated GBMs. In this system, illiquidity of bank 3 (shown in figure~\eqref{fig:cash-joint}) triggers the insolvency of bank 2 (shown in figure~\eqref{fig:capital-joint}).}
\label{fig:joint}
\vspace{-8pt}
\end{figure}
\paragraph{A.2 Joint insolvency and illiquidity}
Consider now both the cash account $V_i$ and the capital $K_i$ as they evolve over time.  As seen in the 2008 financial crisis, repurchase agreement markets can cease functioning during distress scenarios therefore the prior assumption that banks can borrow against their long-term assets may not be realistic in practice.  As such, defaults can occur if a bank becomes illiquid ($V_i(t) \leq 0$) or insolvent ($K_i(t) \leq 0$).
\[\tau_i = \inf\{t \in [0,T] \; | \; \min\{V_i(t),K_i(t)\} \leq 0\}\]
As introduced above in considering illiquidity -- defaults due to illiquidity can\emph{not} trigger additional illiquidity defaults.  However, and crucially, illiquidity \emph{can} trigger cascading failures due to insolvency.
This can be seen explicitly in Figure~\ref{fig:joint} in which the illiquidity of bank $3$ triggers the insolvency of bank $2$.  The effects of these events is shown for the solvent and liquid bank $1$; its capital drops precipitously at the joint default time, but its cash account grows due to the aforementioned effects.
\end{document}